\DeclareMathOperator{\Val}{\matV}
 \DeclareMathOperator{\sign}{sign}
\theoremstyle{plain}
\newtheorem{theorem}{Theorem}[section]
\newtheorem{lemma}[theorem]{Lemma}
\newtheorem{prop}[theorem]{Proposition}
\theoremstyle{remark}
\newtheorem{rmk}[theorem]{Remark}
 \definecolor{light}{gray}{.9}
\newcommand{\beq}{\begin{equation}}
\newcommand{\eeq}{\end{equation}}
\def\bea{\begin{eqnarray}}
\def\eea{\end{eqnarray}}
\renewcommand{\Re}{{\rm Re}}
\renewcommand{\Im}{{\rm Im}}
\newcommand{\ZZZ}{\mathds{Z}}
\newcommand{\RRR}{\mathds{R}} 
\newcommand{\TTT}{\mathds{T}} 
\newcommand{\uno}{\mathds{1}} 
\newcommand{\DD}{{\mathcal D}} 
\newcommand{\calF}{{\mathcal F}} 
\newcommand{\calG}{{\mathcal G}}
\newcommand{\MM}{{\mathcal M}} 
\newcommand{\calO}{{\mathcal O}}
\newcommand{\RR}{{\mathcal R}} 
\newcommand{\TT}{{\mathcal T}}
\newcommand{\ol}{\overline} 
\newcommand{\io}{\infty} 
\newcommand{\al}{\alpha} 
\newcommand{\be}{\beta} 
\newcommand{\m}{\mu} 
\newcommand{\x}{\xi}
\newcommand{\g}{\gamma} 
\newcommand{\om}{\omega}
\newcommand{\f}{\varphi} 
\newcommand{\s}{\sigma} 
\newcommand{\del}{\partial}
\newcommand{\nn}{\boldsymbol{n}}
\newcommand{\ee}{\boldsymbol{e}}
\newcommand{\ii}{{\rm i}}
\def\tilde#1{\widetilde{#1}}
\def\ins#1#2#3{\vbox to0pt{\kern-#2 \hbox{\kern#1 #3}\vss}\nointerlineskip}
\numberwithin{equation}{section}
\def\be{\begin{equation}}
\def\ee{\end{equation}}
\def\bea{\begin{eqnarray}}
\def\eea{\end{eqnarray}}
\def\nn{\nonumber}
\def\sx{\sigma^x}
\def\sy{\sigma^y}
\def\sz{\sigma^z}
\def\a{\alpha}
\def\d{\delta}
\def\g{\gamma}
\def\D{\Delta}
\def\r{\rho}
\def\t{\tau}
\def\w{\omega}
\def\sign{\operatorname{sign}}
\def\Z{\mathds{Z}}
\def\f{\varphi}
\def\hp{\psi}
\def\R{\mathds{R}}
\def\N{\mathds{N}}
\newcommand{\OOO}[1]{O \left(#1\right)}
\newcommand{\OO}[1]{O \left(\frac{1}{#1}\right)}
\newcommand{\nocontentsline}[3]{}
\newcommand{\tocless}[2]{\bgroup\let\addcontentsline=\nocontentsline#1{#2}\egroup}
\DeclareMathSymbol{\leqslant}{\mathalpha}{AMSa}{"36} 
\DeclareMathSymbol{\geqslant}{\mathalpha}{AMSa}{"3E} 
\DeclareMathSymbol{\eset}{\mathalpha}{AMSb}{"3F}     
\renewcommand{\leq}{\;\leqslant\;}                   
\renewcommand{\geq}{\;\geqslant\;}                   
\renewcommand{\le}{\;\leqslant\;}                   
\renewcommand{\ge}{\;\geqslant\;}                   
\newcommand{\dd}{\,\text{\rm d}}
\def\Val{\operatorname{Val}}
\begin{document}
 
\title[Long time behaviour]{\bf 
Long time behaviour of a local perturbation in the isotropic XY chain under periodic forcing}

\author{Livia Corsi}
\address{Dipartimento di Matematica e Fisica, Universit\`a di Roma Tre, Roma, I-00146, Italy}
\email{lcorsi@mat.uniroma3.it}

\author{Giuseppe Genovese}
\address{Department Mathematik und Informatik, Universit\"at Basel Spiegelgasse 1, CH-4051 Basel, Switzerland}
\email{giuseppe.genovese@unibas.ch}


\date{\today} 
 
\maketitle 
\begin{abstract}
We study the isotropic XY quantum spin chain with a time-periodic transverse magnetic field acting on a single site. The asymptotic problem can be mapped into a highly resonant Floquet-Schr\"odinger equation, for which, under a diophantine-like assumption on the frequency, we show the existence of a periodic solution. The proof is based on a KAM-type renormalisation. This in turn implies the state of the quantum spin chain to be asymptotically a periodic function synchronised with the forcing also at arbitrarily low frequencies. 

\medskip

\noindent
\textbf{MSC:} 82C10, 37K55, 45D05.
\end{abstract}


  

\section{Introduction}\label{intro} 

We investigate the isotropic XY
 quantum spin chain with a periodically time-dependent transverse external field acting only on 
 one site, namely the $\kappa$-th, 
 and free boundary conditions. The Hamiltonian reads
\be\label{eq:HXX-Vt}
H_N(t)=-g\sum_{j=1}^{N-1}\left(\sx_j\sx_{j+1}+\sy_{j}\sy_{j+1}\right)-hV(\w t)\sz_\kappa\,,\quad 1<\kappa<N\,.
\ee

Here $\sx,\sy,\sz$ denote the Pauli matrices, $g,h,\w>0$ are parameters ruling respectively 
the spin-spin coupling, 
the magnitude of the external field and its frequency. We assume that $V(\om t)$ is a real 
periodic analytic function with frequency $\om$:
\be\label{eq:V}
V(\om t)=\sum_{\substack{k\in\ZZZ}} e^{\ii k \om t} V_k\,,
\qquad |V_k|\le C_0 e^{-\s |k|}\,.
\ee

For any $t\in\R$ and $N\in\N$ $H_N(t)$ is a self-adjoint operator on 
$\mathcal{H}_N:={\mathbb{C}^2}^{\otimes N}$, and the 
thermodynamic limit $N\to\infty$ is done as customary in the Fock space
 $\mathcal{F}:=\bigoplus_N \mathcal{H}_N$. 

It is well-known that this system is equivalent to a chain of quasi-free fermions 
and therefore the $N$-particle state is fully described 
by a one-particle wave function. At fixed $t$ the forcing $V(\w t)$ is just a number which we can
incorporate into $h$ and the spectrum is given by the standard analysis of the rank-one perturbation of the 
Laplacian on $\Z$ (see \cite{ABGM1}). Precisely, as $N\to\infty$ we have a band $[-g,g]$ and 
a isolated eigenvalue given by
\be\label{eq:spectHn}
g\sign(h)\sqrt{1+\frac{h^2}{g^2}}\,.
\ee

The study of the dynamics however is not as simple, because when $t$ varies the eigenvalue
moves, and it can touch the band creating resonances. 
More precisely, the dynamics in the time interval $[t_0,t]$ is governed by the following 
Floquet-Schr\"odinger equation on $\Z$ 
(for details on its derivation and its relation with the many-body system we refer to \cite{ABGM1, ABGM2, GG, CG})
\begin{equation}\label{THE EQ}
\ii\del_t\psi(x,t)=gh\Delta\psi(x,t)+hH_F(t,t_0)\psi(x,t)\,,\quad \psi(x,t_0)=\d(x)\,,\quad x\in\Z\,.
\end{equation}
Here $\d(x)$ is the Kronecker delta centred in the origin, $\Delta$ is the Laplacian on $\Z$ 
with spectrum given by $\{-\cos q, q\in[-\pi,\pi]\}$, and
\be
H_F(t,t_0)\psi(x,t):= V(\om t)\psi(x,t)+ig\int_{t_0}^{t}\dd t' J_1(g(t-t'))e^{-\ii\Delta (t-t')}V(\w t')\psi(x, t')\,,
\ee
where 
$$
J_{k}(t):=\frac{1}{2\pi}\int_{-\pi}^{\pi}dx e^{\ii xk+\ii t\cos x }\,,\quad k\in\Z\,
$$

The Floquet operator $H_F$ acts as a memory-term, accounting for the retarded effect of the rest 
of the chain on the site $\kappa$. 
This equation finds a more compact form in the Duhamel representation in the momentum space. 
We denote by $\xi\in[-1,1]$ the points 
of the spectrum of $-\D$. Moreover with a slight abuse of notation throughout the paper we will 
systematically omit the customary $\hat \cdot$ to 
indicate either Fourier transforms (when transforming in space) and Fourier coefficients 
(when transforming in time). 
 
Let $\psi(\xi,t)$, $\xi\in[-1,1]$, denote the Fourier transform of $\psi(x,t)$,  $x\in\Z$. The corresponding of 
equation (\ref{THE EQ}) for $\psi(\xi,t)$ in its Duhamel form reads
\be\label{duat0}
(\uno+\ii hW_{t_0})\psi(\x,t)=1\,,
\ee
where $\{W_{t_0}\}_{t_0\in\R}$ is a family of Volterra operators for any $t>t_0$ and $\xi\in[-1,1]$, defined via
\be\label{eq:Wt0}
W_{t_0}f(\x,t):=\int_{t_0}^t\dd t' J_0(g(t-t'))e^{\ii g\x(t-t')}V(\w t')f(\x, t')\,.
\ee
Let $L^{2}_\xi C^\w_t([-1,1]\times[t_0,t])$ denote the space of square integrable functions in $[-1,1]$
and real analytic\footnote{Mind here the superscript $\w$ denoting analyticity as
customary, not to be confused with the frequency.} in the time interval $[t_0,t]$ .
 Each $W_{t_0}$ is a linear map from 
$L^{2}_\xi C^\w_t([-1,1]\times[t_0,t])$ into itself. For any $t_0$ finite $W_{t_0}$ is a compact integral 
operator, which ensures the existence of a unique solution for $t-t_0<\infty$ (see for instance \cite{EN}). 
We denote this one-parameter 
family of functions with $\psi_{t_0}(\xi,t)$. As $t_0\to-\infty$ the limit of the $W_{t_0}$ is an unbounded 
operator, denoted by $W_{\infty}$, defined through
\be\label{eq:Winf}
W_{\infty}f(\x,t):=\int_{-\infty}^t\dd t' J_0(g(t-t'))e^{\ii g\x(t-t')}V(\w t')f(\x, t')\,.
\ee
One can therefore use $W_{\infty}$ to define an asymptotic version of equation (\ref{duat0}) as $t_0\to-\infty$
\be\label{duat}
(\uno+\ii hW_{\infty})\psi(\x,t)=1\,,
\ee
whose solutions are denoted by $\psi_\infty(\xi,\w t)$: indeed it is easy to check that $W_{\infty}$ maps 
periodic functions of frequency $\w$ 
into periodic functions of frequency $\w$, thus it is somehow expected to find solutions of 
(\ref{duat}) in this class of functions. 
Our main result partially confirms this idea. Precisely
we need the following  assumption on the frequency:
\be\label{eq:dioph-alpha}
\bar{\upepsilon}:=\inf_{k\in\N}\left|\frac{2g}{\om}- k\right|>0\,. 
\ee

Then we have the following result.

\begin{theorem}\label{main-as}
Let $\w>0$ satisfying (\ref{eq:dioph-alpha}). There is $\g_0=\g_0(\om,g,V)$ 
small enough such that if $h<\g_0\om$, 
then there exists a periodic solution of (\ref{duat}) with frequency $\om$, 
$\psi_\infty(x,\w t)\in L_x^2C_t^{\w}(\Z\times\R)$. In particular $\g_0$ is explicitly computable; see \eqref{gamma0}. Moreover
\be\nonumber
\psi_{t_0}(x,t)=\psi_{\infty}(x,\w t)+\OO{\sqrt{t-t_0}}\,.
\ee
\end{theorem}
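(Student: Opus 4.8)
The plan is to construct the periodic solution of \eqref{duat} by a Fourier expansion in time, which turns it into an infinite linear system, to invert that system by a KAM-type renormalisation, and finally to get the convergence rate by an oscillatory-integral estimate on the tail of the memory kernel. Starting from the ansatz $\psi_\infty(\xi,t)=\sum_{n\in\Z}e^{\ii n\om t}\psi_n(\xi)$ and using that $W_\infty$ maps $\om$-periodic functions into $\om$-periodic functions, equation \eqref{duat} becomes the system
\be\nonumber
\psi_n(\xi)+\ii h\,\mathcal{J}(g\xi-n\om)\sum_{k\in\Z}V_k\,\psi_{n-k}(\xi)=\d_{n,0}\,,\qquad \mathcal{J}(\zeta):=\int_0^\infty\dd s\,J_0(gs)\,e^{\ii\zeta s}\,,
\ee
where $\mathcal{J}$ is explicitly computable: $\mathcal{J}(\zeta)=(g^2-\zeta^2)^{-1/2}$ for $|\zeta|<g$, purely imaginary for $|\zeta|>g$, with an inverse--square--root blow-up at $\zeta=\pm g$. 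Equivalently one must invert $\uno+\ii h\,\MM\VV$, where $\MM$ is the multiplication (diagonal in $n$) by $\mathcal{J}(g\xi-n\om)$ and $\VV$ is the convolution in $n$ by $\{V_k\}$; the decay $|V_k|\le C_0e^{-\s|k|}$ makes $\VV$ bounded with exponential off-diagonal decay, while $\MM$ is unbounded.

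The source of the difficulty is that $\mathcal{J}(g\xi-n\om)$, seen as a function of $\xi\in[-1,1]$, is singular at the points $\xi=\pm1+n\om/g$ lying in $[-1,1]$ for every mode with $|n|\le 2g/\om$, and is merely large --- of order $\om^{-1/2}$ --- for the modes just above $2g/\om$; there are thus $\sim g/\om$ ``resonant'' modes, a number that diverges in the low-frequency regime, so no hands-on treatment of a fixed finite block will give bounds uniform as $\om\to0$. The diophantine assumption \eqref{eq:dioph-alpha} guarantees that $2g/\om\notin\N$, which makes all the resonance points simple and mutually distinct (and, for $n\neq0$, distinct from the edges $\xi=\pm1$), with every separation bounded below in terms of $\bar\upepsilon$; this is what keeps the small divisors generated by the iteration summable.

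The technical core is then the KAM-type iteration inverting $\uno+\ii h\,\MM\VV$ on a scale of spaces of sequences that are square--integrable in $\xi$ after desingularisation at the resonance points and that carry an exponential weight in $n$ (the latter being needed for the analyticity $\psi_\infty\in C^\w_t$). At each step one peels off the ``most resonant'' piece --- a neighbourhood, shrinking along the iteration, of the resonant set in $(\xi,n)$ --- and inverts it exactly, resumming the geometric series of singular multipliers that would otherwise pile up into non--integrable powers of $(\xi-\xi_*)^{-1/2}$; this leaves a new equation of the same form whose off-diagonal/resonant error is quadratically smaller, so that the scheme converges superexponentially, $\bar\upepsilon$ controlling the small divisors throughout. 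Convergence holds provided $h<\g_0\om$ with $\g_0=\g_0(\om,g,V)$ explicit --- this is where \eqref{gamma0} comes from --- and the output, after undoing the time--Fourier series and the $\xi\to x$ transform, is the desired $\psi_\infty(x,\om t)\in L_x^2C_t^{\w}(\Z\times\R)$.

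For the convergence estimate I would subtract the two Duhamel identities, $(\uno+\ii hW_{t_0})(\psi_{t_0}-\psi_\infty)=\ii h(W_\infty-W_{t_0})\psi_\infty$: using the a priori bound on $\|\psi_{t_0}(\cdot,t)\|_{L_\xi^2}$ coming from the underlying unitary dynamics (cf. \cite{ABGM1,ABGM2,GG,CG}) to control the Volterra propagation, it remains to bound $(W_\infty-W_{t_0})\psi_\infty(\xi,t)=\int_{-\infty}^{t_0}\dd t'\,J_0(g(t-t'))e^{\ii g\xi(t-t')}V(\w t')\psi_\infty(\xi,\w t')$. Substituting $s=t-t'$, using $J_0(gs)=\sqrt{2/(\pi gs)}\cos(gs-\pi/4)+O(s^{-3/2})$ and the $s$--periodicity of $V(\w(t-s))\psi_\infty(\xi,\w(t-s))$, the integrand is $s^{-1/2}$ times an almost--periodic function of $s$ whose frequencies vanish precisely at the resonances; integration by parts in $s$ then yields $O((t-t_0)^{-1/2})$ away from the resonances, while near them the $L_\xi^2$-control on $\psi_\infty$ established above still gives the estimate in $L_\xi^2$, hence the stated $\OO{\sqrt{t-t_0}}$. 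The main obstacle is the KAM inversion of $\uno+\ii hW_\infty$: since $\mathcal{J}(g\xi-n\om)$ is not square--integrable near the resonances one cannot iterate the perturbation without renormalising, and the bookkeeping of the $\sim g/\om$ interacting resonant blocks --- keeping the small divisors under control via $\bar\upepsilon$ and the threshold $\g_0$ explicit --- is the delicate part.
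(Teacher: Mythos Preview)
Your Fourier setup and identification of the singular multipliers $\mathcal{J}(g\xi-n\om)$ are correct and match the paper. The decomposition into a diagonal multiplication $\MM$ and an off-diagonal convolution $\VV$ is exactly the right picture, and your observation about the $\sim g/\om$ resonant modes is the heart of the difficulty.

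However, the core of your proposal --- a Newton/Nash--Moser type iteration in which one peels off a shrinking resonant neighbourhood at each step and obtains a quadratically smaller error --- is \emph{not} what the paper does, and the paper itself explains why such a scheme is problematic here: the operator to be inverted is a perturbation of the \emph{identity}, so the homological equation at each KAM step is trivial and the usual reducibility machinery gives nothing. There is no superexponential iteration in the proof. What the paper actually does is a \emph{single} resummation of the Lindstedt series \eqref{formale}: it splits each singular propagator $j_\mu(\xi)$ into a localised piece $Lj_\mu$ (supported in a fixed $r$-neighbourhood of its singularity $\xi_\mu$) and a bounded remainder $Rj_\mu$, and then resums only the ``resonances'' of degree one and two --- i.e.~chains of one or two nodes sandwiched between two localised lines with the same momentum --- into the modified propagator
\[
Lj_\mu^\RR(\xi)=\frac{Lj_\mu(\xi)}{1-\MM_\mu(\xi,\g)Lj_\mu(\xi)}\,,
\]
with $\MM_\mu$ given by \eqref{emme}. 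After this one-shot resummation, convergence of the remaining series is proved by a direct (geometric, not quadratic) bound.

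Two ingredients you do not mention are what make this work. First, because the singularities $\xi_\mu$, $|\mu|\le\lfloor 2\al\rfloor$, are pairwise separated by at least $\bar\upepsilon/\al$ (Lemma~\ref{lemma:easy}(ii)), one has $Lj_\mu Lj_{\mu'}\equiv 0$ for $\mu\neq\mu'$; so for each fixed $\xi$ at most \emph{one} momentum is localised, and the ``$\sim g/\om$ interacting resonant blocks'' you worry about do not actually interact. Second, and crucially, the boundedness of $Lj_\mu^\RR$ is not a size estimate but a \emph{sign} computation: Section~\ref{importa} dissects the real and imaginary parts of $\MM_\mu(\xi)Lj_\mu(\xi)$ case by case (Lemma~\ref{lemma:G}) and shows that $|1-\MM_\mu Lj_\mu|\ge 1/2$ thanks to cancellations and positivity coming from the explicit structure of $j_k$. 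Without this, the denominator in $Lj_\mu^\RR$ could vanish. The final combinatorial point is that in a renormalised reed any two localised lines are separated by at least two other lines, so at most $\lceil N/3\rceil$ of the $N$ propagators carry the large factor $T(\g)$ of Proposition~\ref{lemma:jR}; this is what turns into the smallness condition $\g^3 T(\g)B^3<1$ and the explicit $\g_0$ in \eqref{gamma0}.

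Your sketch of the $O((t-t_0)^{-1/2})$ estimate is close to the paper's (Appendix~\ref{Appendix}): one subtracts the two Duhamel identities and bounds the tail integral via the Bessel asymptotics. One refinement: rather than invoking a uniform bound on the Volterra propagation, the paper checks that $W_{t_0}$ \emph{preserves} the $1/\sqrt{t-t_0}$ decay of $q^{[0]}$, and this is exactly where the diophantine condition \eqref{eq:dioph-alpha} enters again --- it rules out the stationary phases $\om(k-\mu)\in\{-2g,0,2g\}$ that would spoil the decay.
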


The main relevance of this result lies in its validity for low frequencies. To the best of our knowledge, a similar control of the convergence to the synchronised periodic state for a periodically forced small quantum system coupled with free fermion reservoirs has been achieved only in \cite{fr}, for a different class of models. In general it is known that the low-frequency assumption makes the dynamics harder to study. 

In \cite{CG} we proved the existence of periodic solutions of (\ref{duat}) with frequency $\w$ if $V_0=\OO{h}$ and 
$h$ small or if $V_0=0$, $\w>2g$ (high frequencies) and $h/\w$ small. The meaning of both these conditions is clear: if $V_0$ is large 
and $h$ is small, then the eigenvalue does not touch the band; if $\w>2g$ then the forcing cannot move energy levels within the band. 
In particular the high-frequency assumption appears in other related works in mathematical and theoretical physics \cite{woj1, BDP, NJP, woj2, woj3, alberto}.

In \cite[Proposition 3.1]{CG} we also proved that if a periodic solutions of (\ref{duat}) 
$\psi_\infty(\xi,t)$ with frequency $\w$ exists, 
then $\psi_{t_0}$ must approach $\psi_\infty(\xi,t)$ as $t_0\to-\infty$, namely the following result.
\begin{prop}\label{prop:asintotica}
Let $\psi_\infty(\x,\w t)$ a periodic solution of (\ref{duat}) with frequency $\w$ satisfying (\ref{eq:dioph-alpha}). 
For any $t\in\RRR$, $\xi\in[-1,1]$ one has
\be\label{eq:delta}
\psi_{t_0}(\x,t)=\psi_\infty(\x,\w t)+\OO{\sqrt{t-t_0}}\,. 
\ee
\end{prop}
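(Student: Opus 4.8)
The strategy is to isolate the exact discrepancy between the two Volterra problems and to show that it is a source decaying like $(t-t_0)^{-1/2}$ which the Volterra resolvent cannot amplify. Since $\psi_{t_0}$ and $\psi_\infty$ obey a Duhamel equation with the \emph{same} operator on $[t_0,t]$, subtracting $(\uno+\ii hW_{t_0})\psi_{t_0}=1$ from $(\uno+\ii hW_\infty)\psi_\infty=1$ and setting $\phi:=\psi_{t_0}-\psi_\infty$ gives
\[
(\uno+\ii hW_{t_0})\phi(\x,t)=\ii h\,R_{t_0}(\x,t),\qquad
R_{t_0}(\x,t):=\int_{-\infty}^{t_0}\dd t'\,J_0\big(g(t-t')\big)\,e^{\ii g\x(t-t')}\,V(\w t')\,\psi_\infty(\x,\w t'),
\]
so that $R_{t_0}=(W_\infty-W_{t_0})\psi_\infty$ is precisely the ``tail'' of the memory that $W_{t_0}$ discards. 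As $W_{t_0}$ acts diagonally in $\x$, one may fix $\x\in[-1,1]$ once and for all.

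\emph{Decay of the source.} Substituting $u=t-t'$, inserting the Fourier series in $t'$ of $V(\w t')\psi_\infty(\x,\w t')$ and using the large-argument expansion $J_0(gu)=\sqrt{2/(\pi gu)}\cos(gu-\pi/4)+\OOO{u^{-3/2}}$, one is reduced to Fresnel-type integrals $\int_{\tau}^{\infty}u^{-1/2}e^{\ii\beta u}\dd u$, $\tau:=t-t_0$, with phases $\beta=g(1\pm\x)\mp\ell\w$, $\ell\in\ZZZ$; for $\beta\neq0$ one integration by parts yields $\big|\int_{\tau}^{\infty}u^{-1/2}e^{\ii\beta u}\dd u\big|\lesssim(|\beta|+\beta^{2}\tau)^{-1/2}$. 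A phase degenerates only when $g(1\pm\x)=\ell\w$, i.e. exactly at a momentum $\x^{*}$ at which the multiplier $\calJ(\nu):=\int_{0}^{\infty}J_{0}(gs)e^{\ii\nu s}\dd s$ of $W_\infty$ is singular (for $\nu=g\x-\ell\w$, at $|\nu|=g$). There the hypothesis that $\psi_\infty$ \emph{solves} \eqref{duat} is essential: comparing Fourier coefficients one has $\calJ(g\x-\ell\w)\,(V\ast\psi_\infty)_{\ell}(\x)=(\delta_{\ell,0}-\psi_{\infty,\ell}(\x))/(\ii h)$ (with $\ast$ the convolution of the time-Fourier sequences), so $(V\ast\psi_\infty)_{\ell}(\x)$ vanishes at least like $|\x-\x^{*}|^{1/2}$ as $\x\to\x^{*}$; this cancels the $|\beta|^{-1/2}$ blow-up and leaves $|R_{t_0}(\x,t)|\lesssim\big(1+g\tau\,|\x-\x^{*}|\big)^{-1/2}=\OO{\sqrt{t-t_0}}$ for every fixed $\x$. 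The diophantine hypothesis \eqref{eq:dioph-alpha} enters precisely here: since $2g/\w\notin\ZZZ$, the band-edge frequencies $\{\pm g+k\w:k\in\ZZZ\}$ are pairwise distinct, so at each $\x$ at most one phase degenerates, and the smallest of the remaining ones, of size $\gtrsim\w\,\bar\upepsilon$, controls all the other oscillatory integrals.

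\emph{The resolvent preserves the decay, and is the main obstacle.} Because $W_{t_0}$ is quasi-nilpotent, $\uno+\ii hW_{t_0}$ is invertible for every $t_0$; but the crude bound on its inverse grows like $e^{ch(t-t_0)}$, so one must estimate against the sharp decay $|J_{0}(gu)|\lesssim(1+gu)^{-1/2}$ \emph{together with} the non-degeneracy of the phase $g(\x\pm1)\neq0$ for $\x\in(-1,1)$ (the endpoints $\x=\pm1$ being treated as above, via $(V\ast\psi_\infty)_{0}(\pm1)=0$). Concretely I would run the Neumann series $\phi=\ii h\sum_{m\geq0}(-\ii hW_{t_0})^{m}R_{t_0}$ in the space of $f(\x,\cdot)$ on $[t_0,\infty)$ with $\sup_{t\geq t_0}(1+t-t_0)^{1/2}\big(|f(\x,t)|+|\del_{t}f(\x,t)|\big)<\infty$, and show that $W_{t_0}$ acts on it with norm $C(\x)$ independent of $t_0$: splitting off the part of the $s$-integral with $t-s=O(1)$ (where $f$ already decays like $(t-t_0)^{-1/2}$) and integrating by parts once in $s$ on the rest turns a $(1+s-t_0)^{-1/2}$-decay into a $(1+t-t_0)^{-1/2}$-decay, while $\del_{t}(W_{t_0}f)=V(\w t)f+(\text{an operator of the same type applied to }f)$ closes the induction on the derivative. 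Since $C(\x)$ is finite (it diverges only as $\x$ approaches a resonant momentum, exactly where $R_{t_0}$ has its compensating vanishing), for $h$ as in Theorem~\ref{main-as} the series converges and $|\phi(\x,t)|\leq h\,|R_{t_0}(\x,t)|\big(1-hC(\x)\big)^{-1}=\OO{\sqrt{t-t_0}}$, which is \eqref{eq:delta}; moreover $\big(1+g(t-t_0)|\x-\x^{*}|\big)^{-1/2}$ is integrable in $\x\in[-1,1]$, so summing over $\x$ yields the form of the bound used in Theorem~\ref{main-as}. (If one wishes to dispense with the smallness of $h$, one Laplace-transforms \eqref{duat} and its $t_0$-truncation in $t-t_0$ at fixed $\x$: $\wh{\phi}$ is then $-\ii h$ times the transform of the source over a characteristic function built from $\calJ$ and $V$, analytic up to the imaginary axis with only square-root branch points at the shifted band edges $\ii\big(g(\x\pm1)+k\w\big)$ and no pole off them, whence the $(t-t_0)^{-1/2}$ rate by a contour deformation.) The genuine difficulty is this uniform-in-$t_0$ control of $(\uno+\ii hW_{t_0})^{-1}$: one must show that the memory operator does not turn the decaying tail into a non-decaying mode, which truly requires exploiting oscillation (or the $p$-plane analyticity), not size bounds.
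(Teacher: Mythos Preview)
Your proposal is correct and follows essentially the same route as the paper (which refers to \cite{CG} for full details and only sketches the case $V_0=0$ in the appendix): derive the Volterra equation $(\uno+\ii hW_{t_0})\phi=\ii hR_{t_0}$ for the difference, show the tail $R_{t_0}$ decays like $(t-t_0)^{-1/2}$ by exploiting the crucial cancellation $(V\ast\psi_\infty)_\ell(\x)\to0$ at the singular $\x$ coming from the equation for $\psi_\infty$, and then prove that iterated applications of $W_{t_0}$ preserve this decay thanks to the non-vanishing phases guaranteed by \eqref{eq:dioph-alpha}.

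The only notable difference is packaging. The paper works Fourier-mode by Fourier-mode in time: it writes $q^{[0]}=R_{t_0}=\sum_k e^{\ii k\om t}q^{[0]}_k$, obtains the explicit form \eqref{eq:q^[0]} for each coefficient, and then computes $q^{[1]}=W_{t_0}q^{[0]}$ again mode-by-mode, reducing everything to concrete oscillatory integrals handled by \cite[Lemma~A.6]{CG}; the diophantine condition enters because the dangerous phase $\om(k-\mu)=\pm2g$ is excluded, while $\om(k-\mu)=0$ is excluded by $V_0=0$ (the case $V_0\neq0$ being simpler). You instead set up a single weighted Banach space with weight $(1+t-t_0)^{1/2}$ (including $\partial_t f$ to close the integration by parts) and show $W_{t_0}$ is bounded on it with a $t_0$-independent norm. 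Both lead to the same Neumann-series convergence under a smallness condition on $h$, and your Laplace-transform remark at the end points to the same spectral picture underlying the mode-by-mode computation.
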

Therefore the control on the long time behaviour of the solution of (\ref{duat0}) amounts to establish the existence of a periodic solution of (\ref{duat}) for $\w<2g$, a condition defining the low frequency regime. This is a genuine PDE question, which is indeed the main focus of this paper.

More specifically, we are facing an unbounded time-dependent perturbation of the continuous spectrum of the Laplacian 
on $\Z$. Problems involving periodic forcing are typically dealt with via a KAM-approach, namely one tries to reduce the perturbation to a 
constant operator by means of a sequence of bounded maps. This is for instance the approach adopted in \cite{NJP, woj3} in the context of interacting many-body system, in which a generalisation of the classical Magnus expansion is exploited via normal form methods. Indeed some 
salient features of periodically driven systems, as for instance pre-thermalisation or slow heating,  from the mathematical point of view are 
essentially consequences of the KAM reduction. A similar approach has been used in \cite{alberto} for the Klein-Gordon equation with a quasi-periodic forcing. All the aforementioned results are valid if the frequency is large enough, as usual in Magnus expansion approaches.  

We cope here with two main sources of difficulty.
First we deal with a perturbation of operators with continuous spectrum. Secondly the operator in (\ref{duat}) is a perturbation 
of the identity, which makes trivial the homological equation at each KAM step. Thus we have to use a different approach. 
As in \cite{CG}, we explicitly construct a solution of (\ref{duat}) by resumming the Neumann series. The main difficulty is represented by the 
occurrence of small denominators which actually vanish at some point within the spectrum of the Laplacian and also accumulate in the 
coefficients of the series. We cure these divergences by a suitable renormalisation of the Neumann series and one major advance of this work is that this is done regardless of the size of the frequency $\w$.

The interaction of a small system (the impurity) with an environment (the rest of the chain) while it is irradiated by monochromatic light 
is a question of primary interest in non-equilibrium statistical physics. Although more complicated systems have been considered 
\cite{matt, bru, fr}, quantum spin chains are particularly appealing as they present a rich phenomenology along with a limited amount 
of technical difficulties. Indeed the lack of ergodicity of such systems has already been object of study in the '70s \cite{Leb, rob}. 
The choice of considering the isotropic XY chain itself simplifies greatly the computations, as one gets exact formulas for the functions $J_k$. 
The dynamics of an impurity was first analysed in \cite{ABGM2} with different forms of time-dependent external fields. In particular in 
the case $V(\w t)=\cos\w t$ the authors computed the magnetisation of the perturbed spin at the first order in $h$, observing a divergence at $\w= 2g$, i.e. for a value violating \eqref{eq:dioph-alpha}.

\

The rest of the paper is organised as follows. In Section \ref{eq} the main objects needed for the proof are introduced while in Section \ref{renorm} we 
prove the existence of a periodic solution of (\ref{duat}) with frequency $\w$. In Section \ref{importa} and Section \ref{stimazze} we prove few accessory results used in Section \ref{renorm}. Finally we attach an Appendix in which we sketch the proof of Proposition \ref{prop:asintotica}.

\subsection*{Acknowledgements} The authors thank for the nice hospitality the School of Mathematics of Georgia Institute of Technology, 
where part of this work was done.
L. C. was partially supported by NSF grant DMS-1500943 and by Emory University. 
The authors are also especially grateful to an anonymous referee who pointed out a crucial mistake
in the first version of the manuscript.


\section{Set-up}\label{eq}


It is convenient to define 
\begin{equation}\label{fag}
\f:=\w t\,,\qquad\a:=\frac g\w\,,\qquad\g:=\frac h\w\,,
\end{equation}
so that we can rewrite (\ref{eq:Winf}) as
$$
W'_\infty\psi(\x,\f):=\int_{-\infty}^\f \dd \f' J_0\left(\a(\f-\f')\right)e^{\ii \a \x (\f-\f')}V(\f')\hp(\x,\f')\,,
$$
and hence a periodic solution of (\ref{duat}) with frequency $\w$ should satisfy
\begin{equation}\label{dua}
(\uno+\ii\g W'_{\infty})\hp(\x,\f)=1\,.
\end{equation}
Such a solution will be  explicitly constructed. 

\smallskip

Note that the $\inf$ in \eqref{eq:dioph-alpha} is indeed a $\min$, and it is attained either at 
$k=\lfloor2\a\rfloor$, i.e. the integer part of $2\al$, or at $k=\lceil2\a\rceil:=\lfloor 2\al\rfloor+1$. Moreover
$\bar{\upepsilon}<1$.

Recall the formula
\be\label{eq:jk-tau}
j(\tau):=\int_0^\infty dt J_0(t)e^{\ii\t t}=\frac{\chi(|\t|\leq1)+\ii\sign(\t)\chi(|\t|>1)}{\sqrt{|1-\t^2|}}\,.
\ee

The proof of \eqref{eq:jk-tau} can be found for instance in \cite[Lemma A.3]{CG}. Unfortunately in 
\cite[(A.11)]{CG} the $\sign(\t)$ in the imaginary part is mistakenly omitted, whereas
it is clear from the proof that
it should appear; see also \cite[(A.12)]{CG}.

Set
\begin{equation}\label{jk}
j_k(\x) :=\frac{1}{\al}j(\x + \frac{k}{\al})
\end{equation}
and let us define $\xi_0:=1$, $\xi_0^*:=-1$ and for $k\in\N$
\be
\xi_k:=\sign k-\frac k\a\,,\quad\xi^*_k:=\sign k+\frac k\a\,. 
\ee

\begin{lemma}\label{lemma:jk-crux}
For all $k\neq0$ one has
\be\label{eq:jk-crux}
j_k(\xi)=\frac{\chi(\sign (k)(\xi-\xi_k)\leq0)+\ii\sign (k)\chi(\sign (k)(\xi-\xi_k)>0)}{\a\sqrt{|(\xi-\xi_k)(\xi+\xi^*_k)|}}\,. 
\ee
\end{lemma}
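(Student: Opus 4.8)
The plan is to read \eqref{eq:jk-crux} off directly from the closed form \eqref{eq:jk-tau}, the definition \eqref{jk} being just a change of variable. Set $\tau=\tau(\xi,k):=\xi+k/\alpha$, so that $j_k(\xi)=\alpha^{-1}j(\tau)$ by \eqref{jk}; then by \eqref{eq:jk-tau} I need to identify the denominator $\alpha\sqrt{|1-\tau^2|}$ with $\alpha\sqrt{|(\xi-\xi_k)(\xi+\xi^*_k)|}$ and the numerator $\chi(|\tau|\le1)+\ii\sign(\tau)\chi(|\tau|>1)$ with $\chi(\sign(k)(\xi-\xi_k)\le0)+\ii\sign(k)\chi(\sign(k)(\xi-\xi_k)>0)$.

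For the denominator I would simply factor
\[
1-\tau^2=\Bigl(1-\xi-\tfrac{k}{\alpha}\Bigr)\Bigl(1+\xi+\tfrac{k}{\alpha}\Bigr),
\]
and rewrite the two factors by means of $\sign k\in\{\pm1\}$ and the definitions of $\xi_k,\xi^*_k$: treating $k>0$ and $k<0$ separately one finds in both cases $1-\tau^2=-(\xi-\xi_k)(\xi+\xi^*_k)$, hence $|1-\tau^2|=|(\xi-\xi_k)(\xi+\xi^*_k)|$ and the denominators agree.

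For the numerator the key elementary remark is that, on the relevant range $\xi\in[-1,1]$, the factor $\xi+\xi^*_k$ never vanishes and has sign $\sign k$: indeed $\xi+\xi^*_k\ge-1+(1+k/\alpha)=k/\alpha>0$ when $k>0$, and $\xi+\xi^*_k\le 1+(-1+k/\alpha)=k/\alpha<0$ when $k<0$. Combined with the factorisation this gives the equivalences
\[
|\tau|\le1\iff 1-\tau^2\ge0\iff(\xi-\xi_k)(\xi+\xi^*_k)\le0\iff\sign(k)(\xi-\xi_k)\le0,
\]
and likewise $|\tau|>1\iff\sign(k)(\xi-\xi_k)>0$. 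Finally one checks that on this last region $\sign(\tau)=\sign k$: for $k>0$ the region is $\{\xi>\xi_k\}$, where $\tau=\xi+k/\alpha>\xi_k+k/\alpha=1>0$; for $k<0$ it is $\{\xi<\xi_k\}$, where $\tau=\xi+k/\alpha<\xi_k+k/\alpha=-1<0$. Plugging these facts into \eqref{eq:jk-tau} produces exactly \eqref{eq:jk-crux}.

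There is no real obstacle: the statement is essentially a substitution into \eqref{eq:jk-tau}. The only points demanding a little care are that $\alpha>0$, so that $k/\alpha$ carries the sign of $k$ (used repeatedly), that the cases $k>0$ and $k<0$ exhaust $k\neq0$, and that the identity holds pointwise for $\xi\in[-1,1]$ with $\xi\neq\xi_k$, the remaining singularity $\xi=-\xi^*_k$ of $j$ always lying outside $[-1,1]$. One may, if preferred, avoid the case split altogether by keeping $\sign k$ as a formal symbol and using $\xi_k=\sign k-k/\alpha$, $\xi^*_k=\sign k+k/\alpha$ verbatim throughout the algebra.
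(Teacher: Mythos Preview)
Your proof is correct and follows essentially the same approach as the paper: both substitute \eqref{jk} into \eqref{eq:jk-tau}, factor $1-\tau^2$ to identify the denominator, and then use the restriction $\xi\in[-1,1]$ together with the sign of $k$ to simplify the indicator functions in the numerator. Your observation that $\xi+\xi^*_k$ has sign $\sign k$ on $[-1,1]$ is the clean way to phrase what the paper achieves by noting that one of the two indicators in each product is identically $1$ (respectively $0$) depending on the sign of $k$.
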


\begin{proof}
Using (\ref{eq:jk-tau}) and \eqref{jk} we have
\be\nonumber
j_k(\xi)=\frac{\chi(|\a\xi+k|\leq\a)+\ii\sign(\a\xi+k)\chi(|\a\xi+k|>\a)}{\a\sqrt{|(\xi-\xi_k)(\xi+\xi^*_k)|}}\,. 
\ee

Let us write
\[
\chi(|\a\xi+k|\leq\a)=\chi(\xi\leq1-\frac k\a)\chi(\xi\geq-1-\frac k\a)\nn
\]
and
\[
\begin{aligned}
\sign(\a\xi+k)\chi(|\a\xi+k|>\a)&=\chi(\a\xi+k>\a)-\chi(\a\xi+k<-\a)\nn\\
&=\chi(\xi>1-\frac k\a)-\chi(\xi<-1-\frac k\a)\,.\nn
\end{aligned}
\]
Now we note that since $\xi\in[-1,1]$, if $k\geq1$ then $\chi(\xi<-1-\frac k\a)=0$ and if $k\leq-1$ then $\chi(\xi>1-\frac k\a)=0$. This implies
\[
\chi(\xi\leq1-\frac k\a)\chi(\xi\geq-1-\frac k\a)=\chi(\sign (k)(\xi-\xi_k)\leq0)\,,\nn
\]
and
\[
\chi(\xi>1-\frac k\a)-\chi(\xi<-1-\frac k\a)=\sign (k)\chi(\sign (k)(\xi-\xi_k)>0)\,
\]
so that the assertion is proven. 
\end{proof}

Note that by the Lemma \ref{lemma:jk-crux}, $j_k(\xi)$ is either real or purely imaginary.
On the other hand $j_0(\x)$ is always real, while $j_k(\x)$ is purely imaginary for $|k|>2\a$.

We conveniently localise the functions $j_k$ about their singularities. Let $r>0$ and set 
\be\label{eq:jloc}
\begin{cases}
Lj_0(\xi)&:=j_0(\xi)(\chi(\x<-1+r)+\chi(\x>1-r))\,,\\
 Lj_k(\x) &:=j_k(\x)\chi(|\x-\x_k|<r)\,,\\
Rj_{k}(\x)&:=j_k(\x) - Lj_k(\x)\,,\qquad k\in\ZZZ\setminus\{0\}.
\end{cases}
\ee

The following properties are proved by straightforward computations.

\begin{lemma}\label{lemma:easy}
\hspace{1cm}
\begin{itemize}
\item [i)] $\xi_k=-\xi_{-k}$ and $\xi^*_k=-\xi^*_{-k}$;
\item [ii)] 
One has
$$
\min_{\substack{|k|,|k'|\le \lfloor2\al\rfloor \\ k\ne k'}} |\x_k - \x_{k'}|=\frac{\bar{\upepsilon}}{\a}\,;
$$
\item [iii)] $\xi_k>0$ if and only if $k<-\a$ or $0<k<\a$. 
\item[iv)] One has
\be\nonumber
\xi_k>\x_{k'} \Longleftrightarrow 
\begin{cases}
k'>k>0\\
k<k'<0\\
k'>0, k<0, k'-k>2\a\\
k'<0, k>0, k'-k>-2\a\\
\end{cases}
\ee
\item[v)] If $|k|>2\a$ then $|\xi_k|>1$;  
\item[vi)]  For $k\geq1$ and $r\in(0,(4\a)^{-1})$ one has
\bea
Lj_k(\xi)&=&\frac{\chi(\xi_k-r<\xi\leq\xi_k)+\ii\chi(\xi_k<\xi<\xi_k+r)}{\a\sqrt{|(\xi-\xi_k)(\xi+\xi^*_k)|}}\label{eq:Lj+}\\
Rj_k(\xi)&=&\frac{\chi(\xi\leq\xi_k-r)+\ii\chi(\xi\geq\xi_k+r)}{\a\sqrt{|(\xi-\xi_k)(\xi+\xi^*_k)|}}\label{eq:Rj+}\,
\eea
and for $k\leq-1$
\bea
Lj_k(\xi)&=&\frac{\chi(\xi_k<\xi <\xi_k+r)-\ii\chi(\xi_k-r<\xi\leq\xi_k)}{\a\sqrt{|(\xi-\xi_k)(\xi+\xi^*_k)|}}\label{eq:Lj-}\\
Rj_k(\xi)&=&\frac{\chi(\xi\geq\xi_k+r)-\ii\chi(\xi\leq\xi_k-r)}{\a\sqrt{|(\xi-\xi_k)(\xi+\xi^*_k)|}}\label{eq:Rj-}\,. 
\eea
\item[vii)] There exist $c_1,c_2>0$ such that for all $k\in\Z$
\be\label{eq:chisarebbeZ}
\inf_{\xi\in[-1,1]}|Lj_k(\xi)|\geq \frac{c_1}{\a\sqrt r}\,,\quad \sup_{\xi\in[-1,1]}|Rj_k(\xi)|\leq \frac{c_2}{\a\sqrt r}\,. 
\ee
\item[viii)] For $|k|>2\a$ and $\upepsilon<\bar\upepsilon$ one has
\begin{equation}\label{jbello}
|j_k(\x)|\le \frac{c_0}{\sqrt{\a{\upepsilon}}}\,,
\end{equation}

\end{itemize}
\end{lemma}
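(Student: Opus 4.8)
\emph{Overall approach.} All eight items are elementary consequences of formulas already at hand, so the plan is to verify them one at a time, using throughout the closed forms $\xi_k=\sign(k)-k/\al$, $\xi^*_k=\sign(k)+k/\al$ (with $\xi_0=1$, $\xi^*_0=-1$), the identity $\xi_k+\xi^*_k=2\sign(k)$, and the expression for $j_k$ in \eqref{eq:jk-crux}. It is convenient to group the points by effort: (i), (iii), (iv), (v) are one-line algebra on $\xi_k,\xi^*_k$; (vi) is a substitution into \eqref{eq:jk-crux}; and (ii), (vii), (viii) require a little more bookkeeping.

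\emph{The algebraic items.} For (i) with $k\neq0$ one uses $\sign(-k)=-\sign(k)$, giving $\xi_{-k}=-\xi_k$ and $\xi^*_{-k}=-\xi^*_k$ at once. For (iii), $\xi_k>0$ reads $k<\al$ when $k>0$ and, since then $\xi_k=-1+|k|/\al$, reads $|k|>\al$, i.e.\ $k<-\al$, when $k<0$. For (iv) one subtracts: if $k,k'$ have the same sign then $\xi_k-\xi_{k'}=(k'-k)/\al$, and if they have opposite signs then $\xi_k-\xi_{k'}=2\sign(k)-(k-k')/\al$; reading off the sign in each case (and handling $k=0$ or $k'=0$ through $\xi_0=1$) produces the four listed alternatives. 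For (v), $k>2\al$ forces $\xi_k=1-k/\al<-1$ and $k<-2\al$ forces $\xi_k=-1+|k|/\al>1$.

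\emph{Localisation formulas and small-denominator bounds.} For (vi) I would plug $\sign(k)=+1$ into \eqref{eq:jk-crux}, getting for $k\geq1$ the expression $j_k(\xi)=\bigl(\chi(\xi\le\xi_k)+\ii\,\chi(\xi>\xi_k)\bigr)\bigl(\al\sqrt{|(\xi-\xi_k)(\xi+\xi^*_k)|}\,\bigr)^{-1}$, and the mirror form with $-\ii$ for $k\le-1$, then intersect the numerator with $\chi(|\xi-\xi_k|<r)$ resp.\ its complement as in \eqref{eq:jloc}; the only point to note is that, since $|\xi_k+\xi^*_k|=2$ and $r<(4\al)^{-1}<\tfrac14$, the ball $\{|\xi-\xi_k|<r\}$ never reaches the other branch point $-\xi^*_k$ of $j_k$, so $\xi+\xi^*_k$ keeps the constant sign $\sign(k)$ on it — the value $\xi=\xi_k$ being irrelevant since the denominator vanishes there. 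For (vii) I would write $|j_k(\xi)|=\bigl(\al\sqrt{P_k(\xi)}\,\bigr)^{-1}$ with $P_k(\xi):=|\xi-\xi_k|\,|\xi+\xi^*_k|$: on $\supp Lj_k$ one has $|\xi-\xi_k|<r$, and since $\xi+\xi^*_k=(\xi-\xi_k)+2\sign(k)$ also $|\xi+\xi^*_k|<2+r$, so $P_k<(2+r)r$ and $|Lj_k|\geq c_1/(\al\sqrt r)$ on its support (the $Lj_0$ case being the same computation near $\xi=\pm1$); on $\supp Rj_k\cap[-1,1]$ (for $k\neq0$) one has $|\xi-\xi_k|\geq r$, and since $-\xi^*_k\notin[-1,1]$ the function $P_k$ is, on each side of $\xi_k$, a quadratic with no zero in $[-1,1]$ other than $\xi_k$, whose minimum over that set is of order $r$ — it is reached at distance $r$ from $\xi_k$, where $|\xi+\xi^*_k|\approx2$, and is no smaller at the endpoints $\xi=\pm1$ — so $P_k\geq r(2-r)$ and $|Rj_k|\leq c_2/(\al\sqrt r)$, with $c_1,c_2$ absolute.

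\emph{The Diophantine input.} For (viii), if $k>2\al$ then by (v) the real branch $[-\xi^*_k,\xi_k]$ of $j_k$ misses $[-1,1]$, so both $\xi-\xi_k$ and $\xi+\xi^*_k$ are positive and increasing on $[-1,1]$ and $P_k$ is minimised at $\xi=-1$, where $P_k(-1)=k(k-2\al)/\al^2$; the hypothesis \eqref{eq:dioph-alpha} enters precisely here, since every integer $k>2\al$ satisfies $k\geq\lceil2\al\rceil>2\al$ and $k-2\al\geq\lceil2\al\rceil-2\al=1-\{2\al\}\geq\bar\upepsilon>\upepsilon$, whence $P_k(-1)\geq2\upepsilon/\al$ and $|j_k(\xi)|\leq(2\al\upepsilon)^{-1/2}$; the case $k<-2\al$ is symmetric via (i). Finally (ii) is the fussiest: for $k,k'$ of the same sign $|\xi_k-\xi_{k'}|=|k-k'|/\al\geq1/\al>\bar\upepsilon/\al$ (using $\bar\upepsilon<1$), while for opposite signs $|\xi_k-\xi_{k'}|=|2\al-(k+|k'|)|/\al$ with $k+|k'|$ running over the integers in $[2,2\lfloor2\al\rfloor]$ (and over $[1,\lfloor2\al\rfloor]$ when one index vanishes); since $\min_{m\in\Z}|2\al-m|=\bar\upepsilon$ is attained at $m\in\{\lfloor2\al\rfloor,\lceil2\al\rceil\}$, it remains only to check that such an $m$ is representable as $k+|k'|$ — immediate for $\al\geq1$ and verified directly for the finitely many index configurations with $\al<1$ — which gives $\min|\xi_k-\xi_{k'}|=\bar\upepsilon/\al$. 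I do not expect any serious obstacle: (ii) takes the most care, and (viii) is the only point where \eqref{eq:dioph-alpha} is actually used.
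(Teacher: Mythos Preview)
Your proposal is correct and matches the paper's own treatment: the paper gives no proof at all, merely stating that ``the following properties are proved by straightforward computations'', and your item-by-item verification carries out exactly those computations with the right ingredients (the closed forms of $\xi_k,\xi^*_k$, the identity $\xi_k+\xi^*_k=2\sign k$, Lemma~\ref{lemma:jk-crux}, and the Diophantine condition \eqref{eq:dioph-alpha} for (viii)). The only cosmetic point is that in (vii) the infimum of $|Lj_k|$ over $[-1,1]$ is of course to be read on the support of $Lj_k$, as you implicitly do.
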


Fix $\upepsilon<\bar{\upepsilon}$  (say $\upepsilon=\bar{\upepsilon}/2$)
and take $r\in(0,r^*)$, with $r^*<\frac{\upepsilon}{4\a}$ 
so that in particular property (vi)
in Lemma \ref{lemma:easy} above is satisfied and moreover one has
\begin{equation}\label{separati}
Lj_k(\x) Lj_{k'}(\x) = 0 \qquad \mbox{for }\quad k\ne k'
\end{equation}
by property (ii) of Lemma \ref{lemma:easy}.


\

Combining a (formal) expansion as a power series in $\g$ and Fourier series in $\f$ (i.e. the so-called
Lindstedt series) we can now obtain a formal series representation for the solution 
of (\ref{dua}) which is the starting point of our analysis. Precisely, we start by writing
\begin{equation}\label{taylor}
\psi(\x,\f) = \sum_{n\ge0}\g^n \psi_{n}(\x,\f)\,,
\end{equation}
so that inserting \eqref{taylor} into \eqref{dua} we see that the coefficients $\psi_n$ must satisfy
\be
\psi_0=1\,,\quad \psi_n=-\ii W'_\infty[\psi_{n-1}]\,.
\ee
We now expand
\[
\psi_n(\x,\f)=\sum_{k\in\ZZZ}\psi_{n,k}(\x) e^{\ii k\f}.
\]

Using that
\be\label{eq:Winf-conv}
(W_\infty \psi_n)_k(\x)=j_k(\x)\sum_{\mu\in\ZZZ}V_{k-\mu}\psi_{n,\mu}(\x)\,,
\ee
by a direct computation we obtain
\be\label{eq:yk}
\left\{
\begin{aligned}
\psi_1(\x,\f)&=\sum_{k_1\in \ZZZ}j_{k_1}(\x)V_{k_1}e^{\ii k_1\f}\,,\\
\psi_2(\x,\f)&=\sum_{k_1,k_2\in\ZZZ} j_{k_1+k_2}(\x)V_{k_2} j_{k_1}(\x)V_{k_1} e^{\ii(k_1+k_2)\f}\\
 &\vdots\\
\psi_n(\f)&=\sum_{k_1,\dots,k_n\in \ZZZ}\Big(\prod_{i=1}^n  j_{\mu_i}(\x) V_{k_i}\Big)e^{\ii\mu_n\f}\,,
\end{aligned}
\right.
\ee
where we denoted
\be\label{conserva}
\mu_p=\mu(k_1,\ldots,k_p):=\sum_{j=1}^p k_j\,.
\ee
Therefore we arrive to write the formal series
\be\label{formale}
\begin{aligned}
\tilde \psi(\x,\f;\g):&=\sum_{\mu\in\ZZZ}e^{\ii\mu\f}\psi_\mu(\x;\g)
=\sum_{\mu\in\ZZZ}e^{\ii\mu\f} \sum_{N\geq0}(-\ii\g)^N
\psi_{N,\mu}(\x)\\
&=\sum_{\mu\in\ZZZ}e^{\ii\mu\f}\sum_{N\geq0}
\sum_{\substack{k_1,\ldots,k_N\in \ZZZ\\ \mu_N=\mu}}
(-\ii\g)^{N}
\Big(\prod_{p=1}^N j_{\mu_p}(\x) V_{k_p}\Big)\,,
\end{aligned}
\ee
which solves\eqref{duat} to all orders in $\g$. Note that for each $N\in\N$ the coefficient of $\g^N$ 
is a sum of singular terms.  This makes it difficult (if not impossible) to show 
 the convergence of (\ref{formale}), and we will instead 
prove the convergence of a resummed series which solves the equation.

\section{Proof of the Theorem}\label{renorm}

To explain our construction of the series giving a solution of \eqref{duat}, it is useful to introduce a slightly modified version of the 
graphical formalism of \cite{CG}, inspired by the one developed in the context of KAM theory (for a review see for instance \cite{G10}).

Since our problem is linear, we shall deal with linear trees, or \emph{reeds}.
Precisely,
an oriented tree is a finite graph with no cycle, such that all the lines are oriented toward a single point 
(the \emph{root})
which has only one incident line (called \emph{root line}). All the points in a tree except the root are called
\emph{nodes}. 
Note that in a tree the orientation induces a natural total ordering ($\preceq$) on the set of the nodes $N(\rho)$
and lines.
If a vertex $v$ is attached to a line $\ell$ we say that $\ell$ exits $v$ if $v\preceq\ell$, otherwise we say that 
$\ell$ enters $v$.
Moreover, since a line $\ell$ may be identified by the node $v$ which it exits, we have a natural total ordering
also on the set of lines $L(\rho)$. We  call \emph{end-node} a node with no line entering it, and
\emph{internal node} any other node. We say that a node has \emph{degree} $d$ if it has exactly $d$ incident
lines. Of course an end-node has degree one.
We call \emph{reed} a labelled rooted tree in which each internal node has degree two.

Given a reed $\rho$ we associate labels with each node and line as follows.
We associate with each node $v$ a \emph{mode label} $k_v\in \ZZZ$ and with each line 
$\ell$ a {\it momentum}
$\mu_\ell \in \ZZZ$ with the constraint
\begin{equation}\label{conservareed}
\mu_\ell = \sum_{v\prec \ell} k_v\,.
\end{equation}
Note that \eqref{conservareed} above is a reformulation of \eqref{conserva}.
We call \emph{order} of a reed $\rho$ the number $\# N(\rho)$ and \emph{total momentum} of a 
reed the momentum associated with the root line.

$\Theta_{N,\mu}$ denotes the set of reeds of order $N$ and total momentum $\mu$. 
We say that a line $\ell$ is \emph{regular} if 
$|\mu_\ell|\geq \lceil2\al\rceil$, otherwise we say it is \emph{singular}.
With every singular line $\ell$ we attach a 
further {\it operator label} $\calO_\ell\in\{L,R\}$; if $\ell$ is singular 
we say that it is \emph{localised} if  $\calO_\ell=L$, otherwise we say that it is {\emph{regularised}}.

We then associate with each node $v$ a \emph{node factor}
\be\label{nodefactor}
\calF_v = V_{k_v}
\ee
and with each line $\ell$ a \emph{propagator}
\be\label{propagator}
\calG_\ell(\x) = \left\{
\begin{aligned}
&j_{\mu_\ell}(\x)\,,\qquad \ell\mbox{ is regular}\\
&\calO_\ell j_{\mu_\ell}(\x)\,,\qquad \ell \mbox{ is singular},
\end{aligned}
\right.
\ee
so that we can associate with each reed $\rho$ a value as
\be\label{val}
\Val(\rho) = \Big(\prod_{v\in N(\rho)} \calF_v\Big) \Big(\prod_{\ell\in L(\rho)} \calG_\ell(\x)\Big).
\ee
In particular one has formally
\be\label{ovvio}
\psi_{N,\mu} = \sum_{\rho\in \Theta_{N,\mu}} \Val(\rho) \,.
\ee

\begin{rmk}\label{stoqua}
If in a reed $\rho$ with $\Val(\rho)\ne0$ there is a localised line $\ell$, i.e. if $\calO_\ell=L$,
then all the lines with momentum  $\mu\ne \mu_\ell$ are either regular or
regularised. Indeed if $\ell$ is localised, then by \eqref{eq:jloc} we have that
$\x$ is $r$-close to $\x_{\mu_\ell}$ and hence it cannot be $r$-close to $\x_\mu$
for $\mu\ne \mu_\ell$; see also \eqref{separati}.
\end{rmk}

Given a reed $\r$ we say that a connected subset $\mathtt{s}$ of nodes and lines in $\rho$ is a 
{\it closed-subgraph} if $\ell\in L(\mathtt{s})$ implies that
$v,w\in N(\mathtt{s})$ where $v,w$ are the nodes $\ell$ exits and enters respectively.
We say that a closed-subgraph $\mathtt{s}$ has degree $d:=|N(\mathtt{s})|$. 
We say that a line $\ell$ {\it exits} a closed-subgraph ${\mathtt{s}}$ if it exits a node in
$N(\mathtt{s})$ and enters either the root (so that $\ell$ is the root line) or a node
in $N(\rho)\setminus N(\mathtt{s})$. Similarly we say that a line {\it enters} $\mathtt{s}$
if it enters a node in $N(\mathtt{s})$ and exits a node in $N(\rho)\setminus N(\mathtt{s})$.
We say that a closed-subgraph $\mathtt{s}$ is a {\it resonance} if it has an
exiting line $\ell_{\mathtt{s}}$ and an entering line
$\ell_{\mathtt{s}}'$, both $\ell_{\mathtt{s}}$ and $\ell_{\mathtt{s}}'$ are localised
(so that in particular
by Remark \ref{stoqua}  the exiting and entering lines of a resonance must  carry the same momentum),
while all lines $\ell\in L(\mathtt{s})$ have momentum $\mu_{\ell}\ne\mu_{\ell_{\mathtt{s}}} $.

Note that by \eqref{conservareed}
one has
\begin{equation}\label{sec}
\sum_{v\in N(\mathtt{s})} k_v=0\,,
\end{equation}

We denote by $\TT_{d,\mu}$ the set of resonances with degree $d$ and entering and exiting
lines with momentum $\mu$.  Note that if $d=1$ then $\TT_{1,\mu}$
is constituted by a single node $v$ with mode $k_v=0$.

Let us set
 \begin{equation}
 \MM_{d,\mu}(\x) := \sum_{\mathtt{s}\in\TT_{d,\mu}}\Val(\mathtt{s})\,,
 \end{equation}
 where  we define the value of a resonance $\mathtt{s}$ as in \eqref{val} but with the products
restricted to nodes and lines in $\mathtt{s}$, namely
 $$
 \Val(\mathtt{s}) := \Big(\prod_{v\in N(\mathtt{s})} \calF_v\Big) \Big(\prod_{\ell\in L(\mathtt{s})} 
 Rj_{\mu_\ell}(\x)\Big)\,.
 $$

Next we proceed with the proof, which we divide into several steps.

\begin{proof}[Step 1: resummation.]
 The idea behind resummation can be roughly described as follows. The divergence of the sum
in \eqref{ovvio} is due to the presence of localised lines (and their possible accumulation). If a
reed $\rho_0\in \Theta_{N,\mu}$ has a localised line $\ell$, say exiting a node $v$, then we can
consider another
reed $\rho_1\in\Theta_{N+1,\mu}$ obtained from $\rho_0$ by inserting an extra node $v_1$
with $k_{v_1}=0$ and an extra localised line $\ell'$ between $\ell$ and $v$, i.e. $\rho_1$ has
an extra resonace of degree one. Of course, while
$\rho_0$ is a contribution to $\psi_N(\f)$, $\rho_1$ is a contribution to $\psi_{N+1}(\f)$,
so when (formally) considering the whole sum, the value of $\rho_1$ will have
an extra factor $(-\ii\g)$. In other words, in the formal sum \eqref{formale} there will be a term of the form
\[
\begin{aligned}
\Val(\rho_0)+(-\ii\g) \Val(\rho_1) &= (\mbox{common factor} )\big(Lj_{\mu_\ell}(\x) +
 Lj_{\mu_\ell}(\x) (-\ii\g) V_0 Lj_{\mu_\ell}(\x)\big) \\
 &= (\mbox{common factor} ) Lj_{\mu_\ell}(\x)\big(1 +
(-\ii\g) V_0 Lj_{\mu_\ell}(\x)\big)
 \end{aligned}
\]
Of course we can indeed insert any chain of resonances of degree one, say of length $p$,
so as to obtain a reed $\rho_p\in\Theta_{N+p,\mu}$,
and when summing their values together we formally have
\[
\begin{aligned}
\sum_{p\ge0} (-\ii\g)^p\Val(\rho_p) &=(\mbox{common factor} ) Lj_{\mu_\ell}(\x)\big( 1+
(-\ii\g) V_0 Lj_{\mu_\ell}(\x) \\
&\qquad\qquad \qquad\qquad+ (-\ii\g )V_0 Lj_{\mu_\ell}(\x)
 (-\ii\g) V_0 Lj_{\mu_\ell}(\x) +\ldots \big)\\
 &=(\mbox{common factor} ) Lj_{\mu_\ell}(\x)\sum_{p\ge0 } ((-\ii\g) V_0 Lj_{\mu_\ell}(\x))^p\\
 &=(\mbox{common factor} ) \frac{Lj_{\mu}(x)}{1+ \ii\g V_0 Lj_\m(\x)}
\end{aligned}
\]
In other words we  formally replace the sum over $N$ of the sum of reeds in $\Theta_{N,\mu}$
with the sum of reeds where no resonance of degree one appear, but with the localised propagators
replaced with 
\[
\frac{Lj_{\mu}(x)}{1+ \ii\g V_0 Lj_\m(\x)}.
\]

Clearly in principle we can perform this formal substitution considering resonances of any degree.
Here it is enough to consider resummations only of resonances of degree one and two. 
The advantage of such a formal
procedure is that the localised propagators do not appear anymore.
However, since the procedure is only formal,
 one has to prove not only that the new formally defined object is indeed well defined, but also
that it solves \eqref{dua}.

Having this in mind,
let $\Theta^{\RR}_{N,\mu}$ be the set of reeds in which no resonance of deree $1$ nor $2$ appear, 
and define
\bea
\MM_\mu(\x)=\MM_\mu(\x,\g) &:=&(-i\g) \MM_{0,\mu}(\x) + (-i\g)^2  \MM_{1,\mu}(\x)\nn\\
&=&-i\g V_0 -\g^2\sum_{k\in\ZZZ} V_k Rj_{k+\mu}(\x) V_{-k} \,.\label{emme}
\eea

In Section \ref{importa} we  prove the following result.

\begin{prop}\label{prop:main}
For all $\mu\in\Z\cap[-2\a,2\a]$ and for
\begin{equation}\label{enorme}
\g\in\left\{
\begin{aligned}
&(0,+\io)\quad\qquad V_0\ge 0 \\
&(0,  c \sqrt{\frac{\upepsilon}{\al}}\frac{|V_0|}{\|V\|_{L^2}^2})
\quad\qquad V_0< 0
\end{aligned}
\right.
\end{equation}
where $c$ is a suitable absolute constant, one has
\be
\inf_{\xi\in[-1,1]} |1-\mathcal M_\mu(\xi)Lj_\mu(\xi)|\geq\frac12\,.  
\ee
\end{prop}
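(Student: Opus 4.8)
The plan is to estimate $|\mathcal M_\mu(\xi)Lj_\mu(\xi)|$ from above by $1/2$ on the support of $Lj_\mu$, exploiting the sign structure of the propagators. First I would observe that $Lj_\mu(\xi)$ is supported in $\{|\xi-\xi_\mu|<r\}$, and on this set, by Lemma~\ref{lemma:easy}(vi)--(vii), $|Lj_\mu(\xi)| \le c_2/(\a\sqrt r)$ — actually I want a \emph{lower} bound too, but for the estimate the key fact is that, writing $\mathcal M_\mu = -\ii\g V_0 - \g^2 S_\mu(\xi)$ with $S_\mu(\xi) := \sum_k V_k Rj_{k+\mu}(\xi) V_{-k}$, the two pieces interact differently with $Lj_\mu$. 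The term $-\ii\g V_0 Lj_\mu(\xi)$ is (up to the factor $-\ii\g V_0$) exactly the localised propagator; the crucial structural input is that $Lj_\mu(\xi)$ is purely real or purely imaginary (Lemma~\ref{lemma:jk-crux}), so multiplying by $-\ii$ rotates it by a quarter turn.

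\smallskip

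The heart of the argument is the case $V_0 \ge 0$, where the claim holds for \emph{all} $\g>0$. Here I would split $1 - \mathcal M_\mu Lj_\mu = (1 + \ii\g V_0 Lj_\mu) + \g^2 S_\mu Lj_\mu$. By Lemma~\ref{lemma:jk-crux}, on the relevant branch $Lj_\mu(\xi)$ has a definite form: for $\mu\ge 1$, say, $Lj_\mu(\xi) = \frac{\chi(\xi_\mu - r<\xi\le\xi_\mu) + \ii\chi(\xi_\mu<\xi<\xi_\mu+r)}{\a\sqrt{|(\xi-\xi_\mu)(\xi+\xi_\mu^*)|}}$, so $\ii\g V_0 Lj_\mu(\xi)$ is either a positive imaginary multiple or a negative real multiple of a positive quantity. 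In either case $|1 + \ii\g V_0 Lj_\mu(\xi)| \ge 1$ when the added term is imaginary, and when it is real and negative one needs to be slightly more careful — but there $1 + \ii\g V_0 Lj_\mu$ is real, and... hmm. Let me instead bound directly: $|1 - \mathcal M_\mu Lj_\mu| \ge |1 + \ii\g V_0 Lj_\mu| - \g^2|S_\mu||Lj_\mu|$. Using $|j_k| \le c_0/\sqrt{\a\upepsilon}$ for $|k|>2\a$ (Lemma~\ref{lemma:easy}(viii)) together with the Cauchy--Schwarz bound on $\sum_k |V_k||V_{-k}| \le \|V\|_{L^2}^2$ and the analyticity decay \eqref{eq:V}, I would show $\g^2|S_\mu(\xi)||Lj_\mu(\xi)| \le C\g^2 \|V\|_{L^2}^2 \cdot \frac{1}{\sqrt{\a\upepsilon}}\cdot\frac{1}{\a\sqrt r}$, which can be made $\le 1/4$ by choosing $r$ small — but $r$ is already fixed, so in fact this forces a smallness condition that is absorbed into \eqref{enorme} in the $V_0<0$ case, and in the $V_0\ge0$ case one needs the geometric cancellation instead. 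The correct argument for $V_0\ge 0$: since $Lj_\mu$ is purely real or imaginary, $-\ii\g V_0 Lj_\mu$ lies on a coordinate axis; a direct computation of $|1-\mathcal M_\mu Lj_\mu|^2$, expanding in real and imaginary parts and using $V_0\ge0$ to fix the sign of the cross term, shows the $\g$ and $\g^2$ contributions coming from the $-\ii\g V_0$ piece are \emph{nonnegative}, so only the $S_\mu$ piece can hurt, and its contribution is controlled by the choice of $r$ made after Lemma~\ref{lemma:easy}.

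\smallskip

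For $V_0 < 0$ the sign of the cross term is unfavourable, so I would simply use the triangle inequality $|1 - \mathcal M_\mu Lj_\mu| \ge 1 - \g|V_0||Lj_\mu| - \g^2|S_\mu||Lj_\mu|$ and impose that each subtracted term be $\le 1/4$. The bound $|Lj_\mu(\xi)|$ near its singularity is not uniformly small, but the combination $|V_0||Lj_\mu(\xi)|$ appears multiplied by $\g$, and here I would need to be cleverer: rather than $\sup|Lj_\mu|$ (which is $+\infty$ at $\xi_\mu$) I should keep $Lj_\mu(\xi)$ and note that on its support $1 - \mathcal M_\mu(\xi) Lj_\mu(\xi) = Lj_\mu(\xi)\big(Lj_\mu(\xi)^{-1} - \mathcal M_\mu(\xi)\big)$ and $|Lj_\mu(\xi)^{-1}| = \a\sqrt{|(\xi-\xi_\mu)(\xi+\xi_\mu^*)|} \ge \ii\g V_0$ contributions... this again reduces to: $|Lj_\mu^{-1} + \ii\g V_0| \ge \g|V_0|$ always (real/imaginary structure), and then subtract $\g^2|S_\mu|$, whence $|1-\mathcal M_\mu Lj_\mu| \ge |Lj_\mu|\cdot(\g|V_0| - \g^2|S_\mu|) \ge |Lj_\mu|\cdot\g|V_0|/2$ provided $\g|S_\mu| \le |V_0|/2$, i.e. $\g \le c\sqrt{\upepsilon/\a}\,|V_0|/\|V\|_{L^2}^2$; and since $|Lj_\mu(\xi)| \ge c_1/(\a\sqrt r)$... no, that gives a lower bound blowing up like $1/\sqrt r$, not $\ge 1/2$. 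I will need to also retain the $+1$: write $1 - \mathcal M_\mu Lj_\mu$ and note that when $|Lj_\mu|$ is large the product $\mathcal M_\mu Lj_\mu$ is bounded \emph{away from} $1$ because $\mathcal M_\mu^{-1}$ is bounded, while when $|Lj_\mu|$ is small $\mathcal M_\mu Lj_\mu$ is small. \textbf{The main obstacle} is exactly this two-regime bookkeeping: making the estimate uniform in $\xi$ across the region where $Lj_\mu(\xi)$ ranges from $0$ to $\infty$, and tracking how the constant $c$ in \eqref{enorme} emerges from Lemma~\ref{lemma:easy}(vii)--(viii) and the choice of $r$. I expect the clean way is to consider $|1-\mathcal M_\mu Lj_\mu|^2$ as a function of the real variable $t := 1/|Lj_\mu(\xi)| \ge 0$ and minimise over $t$, using the real/imaginary dichotomy of $Lj_\mu$ to diagonalise the quadratic form — this is where the hypothesis $h < \g_0\om$ (i.e. smallness of $\g$) and the precise form of \eqref{enorme} are consumed.
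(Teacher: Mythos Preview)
Your proposal has a genuine gap: you repeatedly try to treat the $\gamma^2$-term $S_\mu(\xi)Lj_\mu(\xi)$ as a \emph{small error} to be absorbed (``its contribution is controlled by the choice of $r$''), but this cannot work. Near $\xi=\xi_\mu$ one has $|Lj_\mu(\xi)|\to+\infty$ while $S_\mu(\xi)$ stays bounded away from zero, so $|S_\mu Lj_\mu|$ is unbounded regardless of $r$. This is precisely why the claim for $V_0\ge 0$ holds for \emph{all} $\gamma>0$: the $\gamma^2$-piece is not small, it has the \emph{right sign}. The missing ingredient is an explicit computation of the real and imaginary parts of
\[
G_{\mu,k}(\xi):=(Rj_{\mu+k}(\xi)+Rj_{\mu-k}(\xi))Lj_\mu(\xi),
\]
separately on the two half-intervals $(\xi_\mu-r,\xi_\mu]$ and $(\xi_\mu,\xi_\mu+r)$. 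One finds (using Lemma~\ref{lemma:easy}(vi) and case-splitting on $k$) that on $(\xi_\mu-r,\xi_\mu]$ every $\Re(G_{\mu,k})\ge 0$, so $1-\Re(\mathcal M_\mu Lj_\mu)=1+\gamma^2\sum_k|V_k|^2\Re(G_{\mu,k})\ge 1$ and the bound is trivial there, for all $\gamma$.

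On the other half $(\xi_\mu,\xi_\mu+r)$ the real part of $\mathcal M_\mu Lj_\mu$ can be positive, and the paper's argument is the two-regime split you sensed at the end: writing $|1-\mathcal M_\mu Lj_\mu|^2$ explicitly as $(1-\gamma^2 A/D_\mu)^2+(\gamma V_0/D_\mu+\gamma^2 B/D_\mu)^2$ with $A,B$ built from the $G_{\mu,k}$ and $D_\mu(\xi)=|Lj_\mu(\xi)|^{-1}$, one observes that either the first square is $\ge 1/4$, or else $\gamma^2 B/D_\mu\ge 1/2$ (both $A$ and $B$ are positive combinations of $|V_k|^2$), and then the second square is $\ge(1/2+\gamma V_0/D_\mu)^2$. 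For $V_0\ge 0$ this is $\ge 1/4$ automatically; for $V_0<0$ one needs $\gamma|V_0|/D_\mu$ large enough to push the expression below $-1/2$, and the constraint $\gamma^2/D_\mu\ge c\sqrt{\upepsilon/\alpha}/\|V\|_{L^2}^2$ (forced by being in this regime) converts this into the smallness condition \eqref{enorme}. Your ``minimise over $t=1/|Lj_\mu|$'' idea is close in spirit, but without the sign information on $G_{\mu,k}$ you cannot see that the quadratic form is favourable.
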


Proposition \ref{prop:main} allows us to set
\begin{equation}\label{staqua}
Lj^\RR_\mu(\x):=\frac{Lj_\mu(\x)}{1-\MM_\mu(\x,\g) Lj_\mu(\x)}\,.
\end{equation}

For any $\r\in\Theta^{\RR}_{N,\mu}$ let us define the renormalised value of $\rho$ as
\be\label{renval}
\Val^{\RR}(\rho):=\left(\prod_{v\in N(\r)} \calF_{v}\right)\left(\prod_{\ell\in L(\r)}\calG^{\RR}_{\ell}\right)\,,
\ee
where
\be\label{renprop}
\calG^{\RR}_{\ell_i}=\left\{
\begin{aligned}
&Lj_\mu^{\RR}(\x),\qquad \qquad |\mu_{\ell_i}|\le \lfloor2\al\rfloor,\quad \calO_{\ell_i}=L\,,\\
&Rj_\mu(\x),\qquad \qquad |\mu_{\ell_i}|\le \lfloor2\al\rfloor,\quad \calO_{\ell_i}=R\,,\\
&j_{\mu_{\ell_i}}(\x),\qquad\qquad |\mu_{\ell_i}|\geq \lceil2\al\rceil\,.
\end{aligned}
\right.
\ee

In particular if $\lfloor2\al\rfloor=0$ we have to renormalise only $j_0$, which is the case in 
our previous paper \cite{CG}. Then we define
\be\label{coeffrin}
\psi_\mu^\RR(\x;\g):=\sum_{N\ge1}(-\ii\g)^N
\sum_{\r\in\Theta^\RR_{N,\mu}}\Val^\RR(\rho)\,,
\ee
so that
\begin{equation}\label{asy1}
\psi^\RR(\f;\x,\g):=
\sum_{\mu\in\ZZZ}e^{\ii\mu\f}\psi_\mu^\RR(\x;\g)\,,
\end{equation}
is the renormalised series we want to prove to be a regular solution of (\ref{dua}).
\end{proof}

\begin{proof}[Step 2: radius of convergence.]

First of all we prove that the function \eqref{asy1} is well defined.

We start by noting that the node factors are easily bounded by \eqref{eq:V}.
The propagators defined in (\ref{renprop}) are bounded as follows. If $|\mu_\ell|\ge \lceil2\al\rceil$
formula (\ref{jbello}) yields
$$
|j_\mu(\x)|\leq\frac{c_0}{\sqrt{2\lceil2\al\rceil\upepsilon}}\,,
$$
while for $|\mu|\le\lfloor2\a\rfloor$, by (\ref{eq:chisarebbeZ}) we have 
$$
|{Rj}_\mu(\x)|\le \frac{c_2}{\a\sqrt r}\,.
$$

Regarding the resummed propagators the bound is more delicate. We start by denoting
\begin{equation}\label{minimo}
\overline{V}:=
\left\{
\begin{aligned}
&0\qquad \mbox{ if } V_k=0,\ \forall \,k\ge1\\
&\max_{k\in\ZZZ\setminus\{0\} }|V_k|^2 \quad \mbox{otherwise}\,,
\end{aligned}
\right.
\end{equation}
and
\be\label{vbarl}
\underline{V}_{\leq 2\a}:=\left\{
\begin{aligned}
&0\qquad \mbox{ if } V_k=0,\ \forall \,k=1,\ldots,\lfloor2\al\rfloor \\
&\min_{\substack{|k|\leq\lfloor2\a\rfloor \\ V_k\ne0}}|V_k| \quad \mbox{otherwise}\,,
\end{aligned}
\right.
\qquad 
\overline{V}_{> 2\a}:=
\left\{
\begin{aligned}
&0\qquad \mbox{ if } V_k=0,\ \forall \,k\ge \lceil2\al\rceil \\
&\max_{\substack{|k|>\lceil2\a\rceil \\ V_k\ne0}}|V_k| \quad \mbox{otherwise}\,,
\end{aligned}
\right.
\ee

In Section \ref{stimazze} we prove the following result.

\begin{prop}\label{lemma:jR}
There is a constant $c>0$ such that
\be\label{stastimazza}
|Lj^\RR_\mu(\x)|\leq T(V,\upepsilon,\a;\g)=T(\g):=
\begin{cases}
\frac{c}{\g|V_0|}&\mbox{if}\quad V_0\neq0\mbox{  and  } \g \leq c\sqrt{\frac{\upepsilon}{{\al}}}
\frac{|V_0|}{\|V\|^2_{L^2}}\,;\\
c\sqrt{\frac{\a}{\upepsilon}}\g^{-2} \underline{V}^{-2}_{\le 2\al}&\mbox{if}\quad V_0=0,\
\underline{V}_{\le 2\al}\ne0\,;\\
c{{\sqrt{\a}}{}}\g^{-2} \ol{V}^{-2}_{> 2\al}&\mbox{if}\quad V_0=0,\
\underline{V}_{\le 2\al}=0\,.
\end{cases}
\ee
\end{prop}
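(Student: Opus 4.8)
The plan is to convert the estimate into a uniform pointwise lower bound for $Lj_\mu(\x)^{-1}-\MM_\mu(\x,\g)$ and then to exploit a sign structure of $\MM_\mu$ forced by the reality of $V$. Since $Lj^\RR_\mu(\x)=Lj_\mu(\x)\bigl(1-\MM_\mu(\x,\g)Lj_\mu(\x)\bigr)^{-1}$ vanishes off the localisation interval (the set $\{|\x-\xi_\mu|<r\}$ for $\mu\ne0$, a one-sided neighbourhood of $\pm1$ for $\mu=0$), I only need to bound it there, and there $Lj_\mu(\x)\ne0$, so $Lj^\RR_\mu(\x)=\bigl(Lj_\mu(\x)^{-1}-\MM_\mu(\x,\g)\bigr)^{-1}$. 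By Lemma \ref{lemma:jk-crux} and \eqref{eq:jloc}, $Lj_\mu(\x)^{-1}$ equals $\a\sqrt{|(\x-\xi_\mu)(\x+\xi^*_\mu)|}$ times a fixed unimodular constant which is $1$ on one half of the interval and $\mp\ii$ on the other; hence $|Lj_\mu(\x)^{-1}|$ runs from $0$ to $O(\a\sqrt r)$ and $\Re Lj_\mu(\x)^{-1}\ge0$ throughout. Moreover, as $V$ is real one has $V_kV_{-k}=|V_k|^2\ge0$, and since $Rj_\mu(\x)=0$ on $\supp Lj_\mu$ the $k=0$ term drops from \eqref{emme}, leaving
\[
\MM_\mu(\x,\g)=-\ii\g V_0-\g^2\sum_{k\ne0}|V_k|^2\,Rj_{\mu+k}(\x).
\]
By Lemma \ref{lemma:jk-crux} each $Rj_{m}(\x)$ has nonnegative real part, so $\Re\MM_\mu(\x,\g)\le0$ and consequently
\[
\bigl|Lj_\mu(\x)^{-1}-\MM_\mu(\x,\g)\bigr|\ \ge\ \max\Bigl\{\ \g^2\!\!\sum_{k\ne0}|V_k|^2\,\Re Rj_{\mu+k}(\x)\ ,\ |\Im\MM_\mu(\x,\g)|\ \Bigr\},
\]
where the second quantity is available on the half of the interval on which $Lj_\mu^{-1}$ is real. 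Finally, a direct computation — using that for $\x\in\supp Lj_\mu$ and $k\ne0$ the point $\x$ stays at distance $\gtrsim\bar\upepsilon/\a$ from the two singularities $\xi_{\mu+k},-\xi^*_{\mu+k}$ of $j_{\mu+k}$ (Lemma \ref{lemma:easy}(ii),(v)) while these two points are exactly distance $2$ apart — gives $|Rj_{\mu+k}(\x)|\le C(\a\upepsilon)^{-1/2}$, hence $\sum_{k\ne0}|V_k|^2|Rj_{\mu+k}(\x)|\le C(\a\upepsilon)^{-1/2}\|V\|_{L^2}^2$.

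For $V_0\ne0$ this already closes the argument. Once $\g$ is below the threshold in \eqref{stastimazza} the last sum times $\g^2$ is $\le\tfrac14\g|V_0|$, so $|\Im\MM_\mu|\ge\tfrac34\g|V_0|$ and $|\MM_\mu|\ge\tfrac34\g|V_0|$. On the half where $Lj_\mu^{-1}$ is real this gives $|Lj^\RR_\mu(\x)|\le4/(3\g|V_0|)$ at once; on the other half, where $Lj_\mu^{-1}=\mp\ii\beta$ with $0\le\beta\le O(\a\sqrt r)$, one either takes $r$ small enough that $\beta\le\tfrac14\g|V_0|$ (keeping the imaginary part bounded below), or keeps $r$ and distinguishes whether $|\beta\mp\g V_0|\ge\tfrac12\g|V_0|$ or not, the second alternative forcing $\x$ so close to $\xi_\mu$ that one falls back on the real-part bound used in the next paragraph. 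Either way $|Lj^\RR_\mu(\x)|\le c/(\g|V_0|)$.

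When $V_0=0$ the imaginary part of $\MM_\mu$ may vanish, so everything rests on $\bigl|Lj_\mu(\x)^{-1}-\MM_\mu(\x,\g)\bigr|\ge\g^2\sum_{k\ne0}|V_k|^2\Re Rj_{\mu+k}(\x)$, and the heart of the proof is to bound this sum below uniformly in $\x$. By \eqref{eq:Rj+}--\eqref{eq:Rj-}, $\Re Rj_{m}(\x)>0$ precisely when $\x$ lies beyond the localisation radius $r$ on the ``outer'' side of $\xi_m$ (with the sign dictated by $m$), in which case $\Re Rj_m(\x)=\bigl(\a\sqrt{|(\x-\xi_m)(\x+\xi^*_m)|}\bigr)^{-1}$ is bounded below by the relevant entry of $T(\g)$ once $|(\x-\xi_m)(\x+\xi^*_m)|$ is controlled from above via Lemma \ref{lemma:easy}(ii). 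It therefore suffices to exhibit, for every $\x\in\supp Lj_\mu$, one mode $k\ne0$ with $V_k\ne0$ and $\Re Rj_{\mu+k}(\x)>0$; this is a finite case analysis on $\sign\mu$ and on the monotonicity of $k\mapsto\xi_k$ (Lemma \ref{lemma:easy}(iii),(iv)), and it also produces the trichotomy in \eqref{stastimazza}: the bound is governed by the smallest nonzero Fourier coefficient of $V$ that can be made to contribute a positive real part, namely $|V_0|$ if $V_0\ne0$, otherwise $\underline V_{\le2\al}$ if some $V_k\ne0$ with $|k|\le\lfloor2\al\rfloor$, and otherwise $\ol V_{>2\al}$ (the case in which only high modes are present, where $\xi_{\mu+k}$ stays favourably placed).

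I expect this last point — securing, \emph{for each} $\x\in\supp Lj_\mu$, a mode that contributes a strictly positive and quantitatively bounded-below real part to $\MM_\mu(\x,\g)$ — to be the real obstacle: one must simultaneously rule out that $Rj_{\mu+k}(\x)$ sits on its purely imaginary branch at that particular $\x$ and that $\xi_{\mu+k}$ is placed unfavourably relative to $\x$, and this is exactly where the full arithmetic of the singular momenta encoded in Lemma \ref{lemma:easy}(ii)--(v) and the diophantine width $\bar\upepsilon$ must be used. The remaining ingredients — bounding node factors by \eqref{eq:V}, bounding $j_\mu$ and $Rj_\mu$ by \eqref{eq:chisarebbeZ} and \eqref{jbello}, and assembling the three cases — are routine.
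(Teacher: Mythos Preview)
Your strategy coincides with the paper's: on the localisation interval write $|Lj^\RR_\mu(\x)|=|Lj_\mu(\x)^{-1}-\MM_\mu(\x,\g)|^{-1}$ and bound the denominator below using the sign structure forced by $V_kV_{-k}=|V_k|^2\ge0$. Your observation that each $\Re Rj_m(\x)\ge0$ is correct and is implicit in the paper's computation, but the paper goes further: it \emph{pairs} $k$ with $-k$ and computes $Rj_{\mu+k}(\x)+Rj_{\mu-k}(\x)$ exactly on each half of $(\xi_\mu-r,\xi_\mu+r)$ (Lemma~\ref{lemma:G}), obtaining an explicit decomposition of $\MM_\mu$ in terms of the nonnegative sums $K_1,K_2,\tilde K_2$ of \eqref{capponinuovi}. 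The trichotomy in \eqref{stastimazza} then follows from the two-sided bounds on these sums (Lemmata~\ref{lemma:crux}--\ref{monte}), which --- as you correctly anticipate --- are obtained by isolating a single contributing mode. For the case $V_0=0$ your plan is therefore essentially the paper's argument in outline.

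The genuine gap in your sketch is the case $V_0\ne0$ on the half of the interval where $Lj_\mu(\x)^{-1}=\mp\ii D_\mu(\x)$. There the imaginary part of $Lj_\mu^{-1}-\MM_\mu$ contains the combination $\mp D_\mu(\x)+\g V_0+O(\g^2)$, and since $D_\mu$ sweeps $[0,O(\sqrt{\al\upepsilon})]$ this \emph{does} vanish at some $\x$ (the smallness hypothesis on $\g$ does not prevent it). Your proposed fallback then yields only $|Lj_\mu^{-1}-\MM_\mu|\ge\g^2\sum_{k\ne0}|V_k|^2\Re Rj_{\mu+k}(\x)$, a quantity of order $\g^2$ rather than $\g|V_0|$, and one that may even vanish when $\underline V_{\le2\al}=0$; nor can you ``take $r$ small enough'', since $r$ is fixed independently of $\g$. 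The paper's route is different: via the explicit pairing it rewrites $|D_\mu(\x)-\MM_\mu(\x)|$ as the quantities $d(\xi),s(\xi)$ of \eqref{eq:destra}--\eqref{eq:sinistra}, in which $\g V_0$ always sits alongside $\g^2K_1$ (with $K_1\ge0$) in a component that does \emph{not} involve $D_\mu$, giving $|\g V_0+\g^2K_1|\ge\tfrac12\g|V_0|$ directly under the stated condition on $\g$. To reproduce the first line of \eqref{stastimazza} you therefore need the exact real/imaginary split furnished by Lemma~\ref{lemma:G}, not merely the coarse positivity $\Re Rj_m\ge0$.
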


Let us set now
\be\label{eq:B}
B=B(r,\a,\upepsilon):=\max\left(\frac{1}{\a\sqrt r},\frac{1}{\sqrt{\a\upepsilon}}\right) =\frac{1}{\al\sqrt{r}}
 \,,\qquad
C_1:=\max(c_0,c_2/2)\,.
\ee
Note that if in a reed $\rho\in\Theta^\RR_{N,\mu}$ there are $l$ localised lines, 
we have
\begin{equation}\label{mestavoascorda1}
\begin{aligned}
|\Val^\RR(\r)|&=\Big(\prod_{v\in N(\rho)} |\calF_v|\Big) \Big(\prod_{\ell\in L(\rho)}| \calG_\ell|\Big)\\
&\le \Big( C_0e^{-\s\sum_{v\in N(\rho)}|n_v|}\Big) \Big(\prod_{\ell\in L(\rho)}| \calG_\ell|\Big)\\
&\le
C_0C_1^NB^{N}T(\g)^{l }e^{-\s|\mu|}\,,
\end{aligned}
\end{equation}
for some constant $C_0>0$.
By construction, in a renormalised reed  there must
be at least two lines between two localised lines, since we resummed the resonances
of degree one and two. This implies that a 
reed with $N$ nodes can have at most $l=\lceil N/3\rceil$ localised lines.

Then by (\ref{coeffrin}) we obtain
\be\label{stimatotale1}
|\psi_\mu^\RR(\x;\g)|\le C \sum_{N\ge 1}\g^{N} B^{N}T(\g)^{\frac N3}e^{-\s|\mu|/2}\,,
\ee
so that
the series above converge for
\be\label{eq:cond2gamma}
\g^3 T(\g)B^3<1\,. 
\ee
This entails
\be\label{eq:conv-gamma}
\g<
\begin{cases}
\min\left(\sqrt[3]{|V_0|B^{-3}},c\sqrt{\frac{\upepsilon}{{\al}}}
\frac{|V_0|}{\|V\|^2_{L^2}}\right)&V_0\neq0\,;\\
c^{-1} B^{-3}\sqrt{\frac{\upepsilon}{\al}}\ \underline{V}_{\le2\al}^2 &V_0=0\,,\underline{V}_{\le2\al}>0;\\
c^{-1} B^{-3}{\frac{1}{\sqrt{\al}}} \ol{V}_{>2\al}^2
&V_0=0\,,\underline{V}_{\le2\al}=0.\\
\end{cases}
\ee

Therefore under such smallness condition on $\g$, the function $\psi^\RR(\f;\x,\g)$ (recall (\ref{asy1})) is 
analytic w.r.t. $\f\in\TTT$, uniformly in $\x\in[-1,1]$ and for $\g$ small enough  
\end{proof}


Choosing $\upepsilon=\ol{\upepsilon}/2$ and
 $r:=\frac{\upepsilon}{8\a}$ we 
have by (\ref{eq:B})
$$
B(r,\a,\upepsilon)^{-3}=\sqrt{\frac{{\al^3\upepsilon^3}}{{64}}}\,,
$$
so that condition \eqref{eq:conv-gamma} implies that
 the series converges for $\g\le \g_0:=c_1\g_1$ where
\begin{equation}\label{gamma0}
\g_1:=\begin{cases}
\min\left(\sqrt[3]{|V_0|(\frac{\sqrt{\al\bar{\upepsilon}}}{4})^{3}},\sqrt{\frac{\bar \upepsilon}{{2\al}}}
\frac{|V_0|}{\|V\|^2_{L^2}}\right)&V_0\neq0\,;\\
 (\frac{\sqrt{\al\bar{\upepsilon}}}{4})^{3}\sqrt{\frac{\bar\upepsilon}{2\al}}\ 
\underline{V}_{\le2\al}^2 &V_0=0\,,\underline{V}_{\le2\al}>0;\\
 (\frac{\sqrt{\al\bar{\upepsilon}}}{4})^{3}{\frac{1}{\sqrt{\al}}} \ol{V}_{>2\al}^2
&V_0=0\,,\underline{V}_{\le2\al}=0.\\
\end{cases}
\end{equation}
and $c_1:=\min\{c,c^{-1}\}$.


\begin{proof}[Step 3: $\psi^\RR(\f;\x,\g)$ solves \eqref{dua}.]
Now we want to prove that
$$
(\uno+i\g W'_{\infty})\psi^{\RR}(\f;\x,\g)=1\,.
$$
This is essentially a standard computation.
Using (\ref{coeffrin}) and (\ref{asy1}), the last equation can be rewritten as
\be\label{eq:W'-step5}
i\g W'_{\infty}\psi^{\RR}(\f;\x,\g)=1-\psi^{\RR}(\f;\x,\g)=-\sum_{\mu\in\Z}e^{\ii \mu \f}\sum_{N\geq1}(-\ii\g)^N
\sum_{\r\in\Theta^\RR_{N,\mu}}\Val^\RR(\rho)\,.
\ee
Moreover thanks to (\ref{eq:Winf-conv}) we can compute
\bea
i\g W'_{\infty}\psi^{\RR}(\f;\x,\g)&=&\ii\g\sum_{\mu\in\Z}\psi^\RR(\x;\g)(W'_\infty e^{\ii\mu\f})\nn\\
&=&\ii\g\sum_{\mu\in\Z} e^{\ii\mu\f}j_{\mu}(\x)\sum_{k\in\Z} V_{\mu-k}\psi_k^{\RR}(\x;\g)\nn\\
&=&\ii\g\sum_{\mu\in\Z} e^{\ii\mu\f}\sum_{N\ge0}(-\ii\g)^{N}j_{\mu}(\x)\sum_{\mu_1+\mu_2=\mu}V_{\mu_1}
\sum_{\r\in\Theta^\RR_{N,\mu_2}}\!\!\!\!
\Val^\RR(\rho)\,.\nn
\eea
Thus we can write (\ref{eq:W'-step5}) in terms of Fourier coefficients as
\be\label{perico}
\sum_{N\ge1}(-\ii\g)^{N}j_{\mu}(\x)\sum_{\mu_1+\mu_2=\mu}V_{\mu_1}
\sum_{\r\in\Theta^\RR_{N-1,\mu_2}}\!\!\!\!
\Val^\RR(\rho)=\sum_{N\geq1}(-\ii\g)^N
\sum_{\r\in\Theta^\RR_{N,\mu}}\Val^\RR(\rho)\,.
\ee

Note that the root line $\ell$ of a reed has to be renormalised only if it carries momentum 
label $|\mu_\ell|\le\lfloor{2\al}\rfloor$ and operator $\calO_\ell=L$, thus for $|\mu_\ell|\ge\lceil{2\al}\rceil$, or
 $|\mu_\ell|\le\lfloor2\al\rfloor$ and $\calO_\ell=R$ we see immediately that \eqref{perico} holds.

Concerning the case $\mu_\ell=\mu$ with $|\mu|\le\lfloor2\al\rfloor$ and $\calO_\ell=L$, we first note that
$$
j_{\mu}(\x)
\sum_{\mu_1+\mu_2=\mu}V_{\mu_1}
\sum_{\r\in\Theta^\RR_{N-1,\mu_2}}\!\!\!\!
\Val^\RR(\rho)=
\sum_{\r\in\ol{\Theta}^\RR_{N,\mu}}
\Val^\RR(\rho)
$$
where $\ol{\Theta}^\RR_{N,\mu}$ is the set of reeds such that the root line may exits a resonance of degree 
$\le2$,
so that equation \eqref{perico} reads
\be\label{sforzo}
\psi^\RR_\mu(\x;\g)= \sum_{N\ge1}(-\ii\g)^{N}
\sum_{\r\in\ol{\Theta}^\RR_{N,\mu}}
\Val^\RR(\rho)\,,
\ee

Let us split
\be\label{split}
\ol{\Theta}^\RR_{N,\mu}=\widetilde{\Theta}^\RR_{N,\mu}\cup \hat{\Theta}^\RR_{N,\mu}\,,
\ee
where  
$\hat{\Theta}^\RR_{N,\mu}$ are the reeds such that the root line indeed exits a resonance of degree $\le2$,
while $\widetilde{\Theta}^\RR_{N,\mu}$ is the set of all other renormalised reeds.
Therefore we have
\be\label{pezzouno}
\sum_{N\ge1}(-\ii\g)^N\sum_{\rho\in\widetilde{\Theta}^\RR_{N,\mu}} \Val^\RR(\rho) = 
Lj_\mu^\RR(\x) \sum_{\mu_1+\mu_2=\mu} (\ii\g V_{\mu_1}) \psi_{\mu_2}^\RR(\x;\g)
\ee
and
\be\label{pezzodue}
\sum_{N\ge1}(-\ii\g)^N\sum_{\rho\in\hat{\Theta}^\RR_{N,\mu}} \Val^\RR(\rho) =
 Lj_\mu^\RR(\x)\MM_{\mu}(\x,\g)Lj_\mu^\RR(\x) \sum_{\mu_1+\mu_2=\mu} (\ii\g V_{\mu_1}) 
 \psi_{\mu_2}^\RR(\x;\g)\,,
\ee
so that summing together \eqref{pezzouno} and \eqref{pezzodue} we obtain $\psi^\RR_\mu(\x;\g)$.
\end{proof}
This concludes the proof of the Theorem.


\section{Proof of Proposition \ref{prop:main}}\label{importa}

In this section we prove Proposition \ref{prop:main}.
We will consider explicitly the case $\mu\in\N$, since negative $\mu$ are dealt with in
a similar way.

Set for brevity
\bea
D_k(\xi)&:=&\a\sqrt{|(\xi-\xi_k)(\xi+\xi^*_k)|}\,,\label{eq_D}\\
A_{\mu,k}(\xi)&:=&-D^{-1}_{\mu+k}(\xi)+D^{-1}_{\mu-k}(\xi)\,,\label{eq:A}\\
G_{\mu,k}(\xi)&:=&(Rj_{\mu+k}(\xi)+Rj_{\mu-k}(\xi))Lj_\mu(\xi)\,, \label{eq_G}
\eea
and note that we can write
\be\label{eq:MeG}
\mathcal M_\mu(\xi) Lj_\mu(\xi)=-\ii\g V_0 Lj_\mu(\xi) -\g^2\sum_{k\geq1}|V_k|^2G_{\mu,k}(\xi)\,.
\ee

The next two lemmas establish useful properties of the functions $G_{\mu,k}(\xi)$ and $A_{\mu,k}(\xi)$.

\begin{lemma}\label{lemma:Aposi}
Let $k\in\N$ and $r$ sufficiently small. One has
\be\label{eq:Aposi}
\inf_{\xi\in[\xi_\mu-r,\xi_\mu+r]} A_{\mu,k}(\xi)>0\,
\ee
\end{lemma}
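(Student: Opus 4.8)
The plan is to reduce the positivity of $A_{\mu,k}$ to a comparison between $D_{\mu+k}$ and $D_{\mu-k}$, and then to an elementary inequality obtained by an explicit change of variables centred at $\xi_\mu$. First I would record the uniform identity $D_k(\xi)^2=\al^2\,|(\xi+k/\al)^2-1|$, valid for every $k\in\Z$: for $k\ge1$ it is $\al^2|(\xi-1+k/\al)(\xi+1+k/\al)|$, for $k\le-1$ one rewrites $\xi_k,\xi^*_k$ via Lemma \ref{lemma:easy}(i), and for $k=0$ it is the normalisation of $j_0$. Since $\xi_j=1-j/\al$ for $j\ge1$, for $r<\bar{\upepsilon}/\al$ and $|t|\le r$ the point $\xi:=\xi_\mu+t=1-\mu/\al+t$ stays strictly away from every zero of $D_{\mu+k}$ and of $D_{\mu-k}$: the "inner" zeros sit at distance $k/\al$, $\mu/\al$ or $|2-k/\al|=|2\al-k|/\al$ from $\xi_\mu$ (the last bounded below by $\bar{\upepsilon}/\al$ by \eqref{eq:dioph-alpha}), while the "outer" zeros lie outside $[-1,1]$. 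Hence $D_{\mu\pm k}$ are continuous and strictly positive on $[\xi_\mu-r,\xi_\mu+r]$, so $A_{\mu,k}$ is continuous there, and since both $D$'s are positive it suffices to prove $D_{\mu+k}(\xi)^2>D_{\mu-k}(\xi)^2$ pointwise on the interval; the assertion then follows by compactness.

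Now $\xi+(\mu\pm k)/\al=1\pm k/\al+t$, so on the interval
\[
D_{\mu+k}(\xi)^2=\al^2\,(k/\al+t)(2+k/\al+t),\qquad
D_{\mu-k}(\xi)^2=\al^2\,\big|(k/\al-t)(2-k/\al+t)\big|,
\]
the first because $0<k/\al+t$ there, and the absolute value in the second reducing to a fixed sign on the whole interval, namely that of $2-k/\al+t$, i.e. of $2\al-k$ (the borderline $k=2\al$ being excluded by \eqref{eq:dioph-alpha}). Writing $a:=k/\al$ and expanding, if $k<2\al$ one obtains
\[
D_{\mu+k}(\xi)^2-D_{\mu-k}(\xi)^2=2\al^2\big(a^2+2t+t^2\big)=2\al^2\big(a^2+(1+t)^2-1\big),
\]
which equals $2k^2>0$ at $t=0$ and stays positive on $[\xi_\mu-r,\xi_\mu+r]$ once $r$ is small (any $r<1-\sqrt{1-a^2}$, and any $r<1$ as soon as $k\ge\al$). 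If instead $k>2\al$ one obtains
\[
D_{\mu+k}(\xi)^2-D_{\mu-k}(\xi)^2=4\al k\,(1+t),
\]
positive for every $|t|<1$. In all cases — including the degenerate $k=\mu$, where $D_{\mu-k}=D_0$ and the same computation goes through — we get $D_{\mu+k}(\xi)>D_{\mu-k}(\xi)>0$, hence $A_{\mu,k}(\xi)=D_{\mu-k}(\xi)^{-1}-D_{\mu+k}(\xi)^{-1}>0$ on the whole interval, and therefore $\inf_{\xi\in[\xi_\mu-r,\xi_\mu+r]}A_{\mu,k}(\xi)>0$.

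The routine but unavoidable work is the sign bookkeeping: one must verify that the collapse to the two displayed formulas genuinely covers every configuration of $\operatorname{sign}(\mu-k)$ and of $|\mu\pm k|$ relative to $2\al$, which is where Lemma \ref{lemma:easy}(i) and \eqref{eq:dioph-alpha} enter. The one genuinely delicate point, and the main obstacle, is that the threshold imposed on $r$ is not uniform in $k$ when $k<2\al$: the gap $2\al^2(a^2+(1+t)^2-1)$ degenerates like $a^2=(k/\al)^2$ as $k\to1$, so "$r$ sufficiently small" must be read with this in mind — uniformly in $k$ the robust regime is $k\ge\lceil2\al\rceil$, which is exactly the range of $k$ in which $G_{\mu,k}$ is governed by the factor $A_{\mu,k}$.
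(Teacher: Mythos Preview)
Your argument is correct, but it is considerably more elaborate than the paper's. The paper simply evaluates $A_{\mu,k}$ at the single point $\xi_\mu$, obtaining the closed formula
\[
A_{\mu,k}(\xi_\mu)=
\begin{cases}
\dfrac{2\sqrt k}{\sqrt{4\a^2-k^2}\,(\sqrt{k(2\a-k)}+\sqrt{k(k+2\a)})}>0,& k<2\a,\\[2ex]
\dfrac{4\a}{\sqrt{k^2-4\a^2}\,(\sqrt{k(k-2\a)}+\sqrt{k(k+2\a)})}>0,& k>2\a,
\end{cases}
\]
and then invokes continuity of $A_{\mu,k}$ to get positivity on a small enough interval. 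Your route---reducing to $D_{\mu+k}^2>D_{\mu-k}^2$ and expanding in $t=\xi-\xi_\mu$---reaches the same conclusion but with more work; in exchange you obtain explicit thresholds for $r$ (namely $r<1-\sqrt{1-(k/\a)^2}$ for $k<2\a$, and any $r<1$ for $k>2\a$) and you rightly flag that for $k<2\a$ this threshold degenerates like $(k/\a)^2$, so the smallness of $r$ is not uniform in $k$ in that range. That caveat is implicit in the paper's continuity argument as well; as you observe, the uniform regime $k\ge\lceil2\a\rceil$ is exactly where $A_{\mu,k}$ governs $G_{\mu,k}$. One small cosmetic point: in your list of zero-distances the entry $\mu/\a$ is redundant (it is the $k=\mu$ instance of $k/\a$); the relevant distances are $k/\a$, $2+k/\a$ and $|2-k/\a|$, all bounded below by $\bar{\upepsilon}/\a$ under your assumption on $r$.
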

\begin{proof}
By explicit calculation
\be\label{eq:Aximu}
A_{\mu,k}(\xi_\mu)=
\begin{cases}
\frac{2\sqrt k}{\sqrt{4\a^2-k^2}(\sqrt{k(2\a-k)}+\sqrt{k(k+2\a)})}>0&k<2\a\,,\\
\frac{4\a}{\sqrt{k^2-4\a^2}(\sqrt{k(k-2\a)}+\sqrt{k(k+2\a)})}>0&k>2\a\,
\end{cases}
\ee
so we can conclude by continuity.
\end{proof}

\begin{lemma}\label{lemma:G}
If $\xi \in(\xi_\mu,\xi_\mu+r)$ one has 
\bea
\Re(G_{\mu,k}(\x))&=&\begin{cases}
-(D_\mu(\x) D_{\mu+k}(\x))^{-1}&1\leq k\leq \mu\\
-(D_\mu(\x) D_{\mu+k}(\x))^{-1}&\mu+1\leq k\leq \lfloor2\a\rfloor\\
\frac{A_{\mu,k}(\xi)}{D_\mu(\xi)}&k\geq \lceil2\a\rceil\,.
\end{cases}\label{eq:ReGr}\\
\Im(G_{\mu,k}(\x))&=&\begin{cases}
(D_\mu (\x)D_{\mu-k}(\x))^{-1}&1\leq k\leq \mu\\
(D_\mu (\x)D_{\mu-k}(\x))^{-1}&\mu+1\leq k\leq \lfloor2\a\rfloor\\
0&k\geq \lceil2\a\rceil\,.\label{eq:ImGr}
\end{cases}
\eea
If $\xi \in(\xi_\mu-r,\xi_\mu]$ one has
\bea
\Re(G_{\mu,k}(\x))&=&\begin{cases}
(D_\mu (\x)D_{\mu-k}(\x))^{-1}&1\leq k\leq \mu\\
(D_\mu (\x)D_{\mu-k}(\x))^{-1}&\mu\leq k\leq \lfloor2\a\rfloor\\
0&k\geq \lceil2\a\rceil\,.
\end{cases}\label{eq:ReGl}\\
\Im(G_{\mu,k}(\x))&=&\begin{cases}
(D_\mu (\x)D_{\mu+k}(\x))^{-1}&1\leq k\leq \mu\\
(D_\mu (\x)D_{\mu+k}(\x))^{-1}&\mu+1\leq k\leq \lfloor2\a\rfloor\\
-\frac{A_{\mu,k}(\xi)}{D_\mu(\xi)}&k\geq \lceil2\a\rceil\,.\end{cases}\label{eq:ImGl}
\eea
\end{lemma}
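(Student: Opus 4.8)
The plan is to pin down, for $\xi$ in the support of $Lj_\mu$ (namely the two intervals $(\xi_\mu-r,\xi_\mu]$ and $(\xi_\mu,\xi_\mu+r)$ appearing in the statement), the position of $\xi$ relative to the singular points $\xi_{\mu+k}$ and $\xi_{\mu-k}$, then to read off the three propagators $Lj_\mu(\xi)$, $Rj_{\mu+k}(\xi)$ and $Rj_{\mu-k}(\xi)$ directly from property (vi) of Lemma \ref{lemma:easy}, and finally to expand $G_{\mu,k}(\xi)=\bigl(Rj_{\mu+k}(\xi)+Rj_{\mu-k}(\xi)\bigr)Lj_\mu(\xi)$ and separate real and imaginary parts. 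As in the proof of Proposition \ref{prop:main} I would treat $\mu\in\N$, so that $\mu\le\lfloor2\al\rfloor$ automatically (otherwise $Lj_\mu\equiv0$ by property (v) of Lemma \ref{lemma:easy} and the statement is empty); the case $\mu\le-1$ follows by symmetry using property (i) of Lemma \ref{lemma:easy}, and index ranges that are empty are vacuous.

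First I would prove the relevant separation estimate. Since $\xi_m=\sign m-\tfrac m\al$, one computes $\xi_{\mu+k}-\xi_\mu=-\tfrac k\al$ (so $\xi_{\mu+k}$ always lies to the left of $\xi_\mu$), $\xi_{\mu-k}-\xi_\mu=\tfrac k\al>0$ for $1\le k<\mu$, $\xi_{\mu-k}-\xi_\mu=-\bigl(2-\tfrac k\al\bigr)<0$ for $\mu<k\le\lfloor2\al\rfloor$, and $\xi_{\mu-k}-\xi_\mu=\tfrac k\al-2>0$ for $k\ge\lceil2\al\rceil$. In all cases $|\xi_{\mu\pm k}-\xi_\mu|\ge\bar\upepsilon/\al$: the first two bounds use only $\bar\upepsilon<1$, while the last two use $\lfloor2\al\rfloor\le2\al-\bar\upepsilon$ and $\lceil2\al\rceil\ge2\al+\bar\upepsilon$, namely the Diophantine hypothesis \eqref{eq:dioph-alpha}. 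Since $r<r^*<\tfrac\upepsilon{4\al}<\tfrac{\bar\upepsilon}{4\al}$, for $\xi\in(\xi_\mu-r,\xi_\mu+r)$ we get $|\xi-\xi_{\mu\pm k}|>\bar\upepsilon/\al-r>r$, so $\xi$ lies strictly beyond the $r$-neighbourhood of each of $\xi_{\mu+k},\xi_{\mu-k}$, on the side determined by the signs above (when $|\mu\pm k|>2\al$, property (v) of Lemma \ref{lemma:easy} says that singularity is already outside $[-1,1]$, which the same discussion handles).

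With the locations fixed, \eqref{eq:Lj+}--\eqref{eq:Rj-} give on each sub-interval $Rj_{\mu+k}(\xi)=\ii/D_{\mu+k}(\xi)$; $Rj_{\mu-k}(\xi)=1/D_{\mu-k}(\xi)$ when $1\le k\le\lfloor2\al\rfloor$ (the real branch of a positive mode for $1\le k<\mu$, the value $Rj_0=j_0=1/D_0$ for $k=\mu$, and the real branch of a negative mode for $\mu<k\le\lfloor2\al\rfloor$), while $Rj_{\mu-k}(\xi)=-\ii/D_{\mu-k}(\xi)$ when $k\ge\lceil2\al\rceil$; and $Lj_\mu(\xi)=1/D_\mu(\xi)$ on $(\xi_\mu-r,\xi_\mu]$, $Lj_\mu(\xi)=\ii/D_\mu(\xi)$ on $(\xi_\mu,\xi_\mu+r)$. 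Substituting into $G_{\mu,k}=(Rj_{\mu+k}+Rj_{\mu-k})Lj_\mu$, using $\ii^2=-1$, and recalling from \eqref{eq:A} that $A_{\mu,k}=-D_{\mu+k}^{-1}+D_{\mu-k}^{-1}$, then produces \eqref{eq:ReGr}--\eqref{eq:ImGl} in all four regimes.

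The step I expect to be the main (if elementary) obstacle is exactly this bookkeeping: for each of the ranges $1\le k\le\mu$, $\mu<k\le\lfloor2\al\rfloor$, $k\ge\lceil2\al\rceil$ one must correctly decide whether $\xi_{\mu-k}$ sits left or right of $\xi_\mu$, hence which indicator branch in \eqref{eq:Lj+}--\eqref{eq:Rj-} is active; here the sign of the mode index is crucial, since the real and imaginary branches of $Lj$ and $Rj$ swap (up to complex conjugation) between positive and negative modes. The Diophantine bound $\bar\upepsilon>0$ together with $r<\upepsilon/(4\al)$ is what ensures these $r$-neighbourhoods never overlap, so that on the entire support of $Lj_\mu$ each propagator is a single constant-coefficient branch.
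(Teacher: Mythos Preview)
Your proposal is correct and follows essentially the same approach as the paper: a three-range case split on $k$ (namely $1\le k\le\mu$, $\mu+1\le k\le\lfloor2\al\rfloor$, $k\ge\lceil2\al\rceil$), determination of the relative position of $\xi_{\mu\pm k}$ with respect to $\xi_\mu$, and then reading off the active branch of $Rj_{\mu\pm k}$ from \eqref{eq:Lj+}--\eqref{eq:Rj-} before multiplying by $Lj_\mu$. Your version is in fact slightly more explicit than the paper's, since you verify the separation estimate $|\xi_{\mu\pm k}-\xi_\mu|\ge\bar\upepsilon/\al>2r$ (invoking \eqref{eq:dioph-alpha} at the boundaries $k=\lfloor2\al\rfloor,\lceil2\al\rceil$) rather than simply asserting the ordering and then carrying along the full indicator expressions as the paper does.
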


\begin{proof}
Since we are considering the case $\mu\geq1$, $Lj_\mu(\x)$ is given by (\ref{eq:Lj+}). Our analysis of $G_{\mu,k}(\xi)$ splits in several cases.

\

\textit{i)} $1\leq k\leq \mu$

In this case $\mu-k\geq0$ and $\xi_{\mu+k}<\xi_\mu<\xi_{\mu-k}$. By (\ref{eq:Rj+}) we write
\bea
Rj_{\mu+k}(\xi)+Rj_{\mu-k}(\xi)&=&D_{\mu+k}^{-1}(\x)\chi(\xi<\xi_{\mu+k}-r)+
D^{-1}_{\mu-k}(\x)\chi(\xi<\xi_{\mu-k}-r)\nn\\
&+&\ii D_{\mu+k}^{-1}(\x)\chi(\xi>\xi_{\mu+k}+r)+\ii D_{\mu-k}^{-1}(\x)\chi(\xi>\xi_{\mu-k}+r)\,. 
\eea
A direct computation gives
\bea
D_\mu(\x) G_{k,\mu}(\x)&=&D^{-1}_{\mu-k}(\x)\chi(\xi_\mu-r<\xi\leq\xi_\mu)-
D^{-1}_{\mu+k}(\x)\chi(\xi_\mu<\xi<\xi_\mu+r)\nn\\
&+&\ii(D^{-1}_{\mu+k}(\x)\chi(\xi_\mu-r<\xi\leq\xi_\mu)+D^{-1}_{\mu-k}(\x)\chi(\xi_\mu<\xi<\xi_\mu+r))\,. 
\eea

\

\textit{ii)} $\mu+1\leq k\leq \lfloor2\a\rfloor$

Now $\mu-k<0$ and $\max(\xi_{\mu+k},\xi_{\mu-k})<\xi_\mu$. Therefore by (\ref{eq:Rj+}), (\ref{eq:Rj-})
\bea
Rj_{\mu+k}(\xi)+Rj_{\mu-k}(\xi)&=&D_{\mu+k}^{-1}(\x)\chi(\xi<\xi_{\mu+k}-r)+
D^{-1}_{\mu-k}(\x)\chi(\xi>\xi_{\mu-k}+r)\nn\\
&+&\ii D_{\mu+k}^{-1}(\x)\chi(\xi>\xi_{\mu+k}+r)-\ii D_{\mu-k}^{-1}(\x)\chi(\xi<\xi_{\mu-k}-r)\,.
\eea
Moreover
\bea
D_\mu(\x) G_{k,\mu}(\x)&=&D^{-1}_{\mu-k}(\x)\chi(\xi_\mu-r<\xi\leq\xi_\mu)-
D^{-1}_{\mu+k}(\x)\chi(\xi_\mu<\xi<\xi_\mu+r)\nn\\
&+&\ii(D^{-1}_{\mu+k}(\x)\chi(\xi_\mu-r<\xi\leq\xi_\mu)+D^{-1}_{\mu-k}(\x)\chi(\xi_\mu<\xi<\xi_\mu+r))\,. 
\eea

\

\textit{iii)} $k\geq \lceil2\a\rceil$
We have $\mu-k<0$, $\xi_{\mu+k}<\xi_\mu<\xi_{\mu-k}$ and again
\bea
Rj_{\mu+k}(\xi)+Rj_{\mu-k}(\xi)&=&D_{\mu+k}^{-1}(\x)\chi(\xi<\xi_{\mu+k}-r)+
D^{-1}_{\mu-k}(\x)\chi(\xi>\xi_{\mu-k}+r)\nn\\
&+&\ii D_{\mu+k}^{-1}(\x)\chi(\xi>\xi_{\mu+k}+r)-\ii D_{\mu-k}^{-1}(\x)\chi(\xi<\xi_{\mu-k}-r)\,.
\eea
Therefore
\be
G_{k,\mu}(\x)=-\ii 
A_{\mu,k}(\xi)Lj_\mu(\xi)=\frac{A_{\mu,k}(\xi)}{D_\mu(\xi)}(\chi(\xi_\mu<\xi<\xi_\mu+r)-
\ii\chi(\xi_\mu-r<\xi\leq\xi_\mu))\,. 
\ee
Note that for $\xi \notin(\xi_\mu-r,\xi_\mu+r)$ all $G_{\mu,k}$ are identically zero and by direct inspection we deduce (\ref{eq:ReGr}), (\ref{eq:ImGr}), (\ref{eq:ReGl}) and (\ref{eq:ImGl}).
\end{proof}

Let us introduce the notation
\begin{equation}\label{capponinuovi}
\begin{aligned}
K_1(\xi)&:=\sum_{k=1}^{\lfloor2\a\rfloor} |V_k|^2(D_{\mu-k}(\x))^{-1}\\
K_2(\xi)&:=\sum_{k>2\al} |V_k|^2A_{\mu,k}(\x)\,,\qquad
\tilde{K}_2(\x):=\sum_{k\ge1} |V_k|^2A_{\mu,k}(\x)
\end{aligned}
\end{equation}
Note that $K_1(\x)$ can vanish if and only if $V_k=0$ for all $|k|\le \lfloor 2\al\rfloor$ (that is $\underline V_{\leq 2\a}=0$). Similarly
$K_2(\x)$ can vanish if and only if $V_k=0$ for all $|k|> \lfloor 2\al\rfloor$ (i.e. $\overline V_{>2\a}=0$), while
$\tilde{K}_2(\x)$ can vanish if and only if the forcing is constant in time, i.e. $V(\om t)\equiv V_0$. 

Recall the notations introduced in \eqref{minimo}--\eqref{vbarl}.
We need the following bounds on $K_1(\x)$, $K_2(\x)$ and $\tilde{K}_2(\x)$.

\begin{lemma}\label{lemma:crux}
There is $c>0$ such that for all $\mu\in\Z$ and all $\x\in(\xi_\mu-r,\xi_\mu+r)$ 
if $K_1(\x)\ne0$ then one has
\begin{equation}\label{stimeK1}
c \sqrt{\frac{\upepsilon}{\al}}\ \underline{ V}_{\leq 2\a}^2 \le
K_1(\x)  \le
c\sqrt{\frac{\al}{\upepsilon}}\|V\|^2_{L^2}\,.
\end{equation}
\end{lemma}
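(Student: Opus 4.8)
The plan is to estimate each summand $(D_{\mu-k}(\x))^{-1}$ of $K_1$ pointwise on the localisation interval $(\x_\mu-r,\x_\mu+r)$ and then to sum; I treat $\mu\ge1$, the case $\mu\le0$ being symmetric by Lemma~\ref{lemma:easy}(i), and I may assume $\lfloor2\al\rfloor\ge1$, since otherwise $K_1\equiv0$ and there is nothing to prove. The cornerstone is a uniform two-sided estimate for the propagators. Fix $k\in\{1,\dots,\lfloor2\al\rfloor\}$, $\x\in(\x_\mu-r,\x_\mu+r)$ and put $w:=\x+\tfrac{\mu-k}{\al}$; since $\x_\mu=1-\tfrac\mu\al$ one has $w=1-\tfrac k\al+(\x-\x_\mu)$, and since $D_k(\x)=\al\sqrt{|(\x-\x_k)(\x+\x^*_k)|}$ (see \eqref{eq_D}) equals $\al\sqrt{|1-(\x+k/\al)^2|}$ by Lemma~\ref{lemma:jk-crux}, \eqref{jk} and \eqref{eq:jk-tau}, we get $D_{\mu-k}(\x)=\al\sqrt{|1-w|\,|1+w|}$. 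Now $k/\al\ge\tfrac1\al$ and, as $k\le\lfloor2\al\rfloor$, $2-\tfrac k\al\ge\tfrac{\{2\al\}}\al\ge\tfrac{\bar\upepsilon}\al$; since $r<\upepsilon/(4\al)=\bar\upepsilon/(8\al)$ this yields
\be\nonumber
|1-w|=\tfrac k\al-(\x-\x_\mu)\ge\tfrac{7}{8\al},\qquad |1+w|=2-\tfrac k\al+(\x-\x_\mu)\ge\tfrac{7\bar\upepsilon}{8\al}\,,
\ee
hence $D_{\mu-k}(\x)^2\asymp k(2\al-k)$ uniformly in $\mu$ and in $\x$ on the interval, and moreover $|\x-\x_{\mu-k}|\in\{|1-w|,|1+w|\}\ge\tfrac{7\bar\upepsilon}{8\al}>r$, so that (by \eqref{eq:jloc}) $Rj_{\mu-k}(\x)=j_{\mu-k}(\x)$ is not killed by the cutoff and $(D_{\mu-k}(\x))^{-1}=|j_{\mu-k}(\x)|$. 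When $\mu-k=0$, which can occur as $\mu\le\lfloor2\al\rfloor$, the same conclusion holds with $w=\x$ and $|j_0(\x)|=(\al\sqrt{|1-\x^2|})^{-1}$.

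For the upper bound, the above gives $(D_{\mu-k}(\x))^{-1}\le C\,(k(2\al-k))^{-1/2}$, and $k(2\al-k)\ge\bar\upepsilon=2\upepsilon$ for $1\le k\le\lfloor2\al\rfloor$ (for $k=\lfloor2\al\rfloor$, $\lfloor2\al\rfloor\{2\al\}\ge\bar\upepsilon$; for $k\le\lfloor2\al\rfloor-1$, $k(2\al-k)\ge1$), so $(D_{\mu-k}(\x))^{-1}\le C'\upepsilon^{-1/2}$. Summing over $k$ and applying Parseval,
\be\nonumber
K_1(\x)\le C'\upepsilon^{-1/2}\sum_{k=1}^{\lfloor2\al\rfloor}|V_k|^2\le C'\upepsilon^{-1/2}\|V\|^2_{L^2}\le C'\sqrt{\tfrac\al\upepsilon}\,\|V\|^2_{L^2}\,,
\ee
where the last step uses $\al\ge1$; the case $\al<1$ (then $\lfloor2\al\rfloor=1$, $\mu=1$, and $K_1=|V_1|^2(D_0(\x))^{-1}$) follows directly from the two-sided estimate and $\al>\tfrac12$.

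For the lower bound, all summands are non-negative and $K_1(\x)\ne0$, so there is $k_0\in\{1,\dots,\lfloor2\al\rfloor\}$ with $V_{k_0}\ne0$; then, since $|V_{k_0}|\ge\underline V_{\le2\al}$ (recall \eqref{vbarl}),
\be\nonumber
K_1(\x)\ge|V_{k_0}|^2(D_{\mu-k_0}(\x))^{-1}\ge\underline V^2_{\le2\al}\,(D_{\mu-k_0}(\x))^{-1}\,,
\ee
and it remains to bound $(D_{\mu-k_0}(\x))^{-1}=|j_{\mu-k_0}(\x)|\ge c\,(k_0(2\al-k_0))^{-1/2}$ below by $c'\sqrt{\upepsilon/\al}$. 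This last inequality is the main obstacle: the crude bound $k_0(2\al-k_0)\le\al^2$ only gives $(D_{\mu-k_0}(\x))^{-1}\gtrsim\al^{-1}$, which suffices when $\al\lesssim\upepsilon^{-1}$, whereas in general one must analyse more carefully where $w_0=\x+\tfrac{\mu-k_0}{\al}$ sits relative to $1$ and $-1$, using the Diophantine bound $\dist(\al,\Z)\ge\bar\upepsilon/2=\upepsilon$ and the separation $\ge\bar\upepsilon/\al$ of the singular points $\x_j$ (Lemma~\ref{lemma:easy}(ii),(v)). Granting it, $K_1(\x)\ge c'\sqrt{\upepsilon/\al}\,\underline V^2_{\le2\al}$, and the case $\al<1$ is again immediate.
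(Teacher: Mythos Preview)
Your upper bound is essentially correct and close in spirit to the paper's argument: both rest on the fact that $D_{\mu-k}(\xi)\asymp\sqrt{k(2\al-k)}$ uniformly on $(\xi_\mu-r,\xi_\mu+r)$, though the paper reaches this via Lipschitz continuity at $\xi=\xi_\mu$ (where $D_{\mu-k}(\xi_\mu)=\sqrt{k|2\al-k|}$ exactly) rather than through your direct two-sided pointwise bound.

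The lower bound, however, is left open---you say so yourself with ``Granting it''---and the single-term strategy you adopt cannot be salvaged to close it. If the only nonzero mode in range is $V_{k_0}$ with $k_0$ near $\al$, then $D_{\mu-k_0}(\xi)\asymp\al$ and you obtain at best $K_1(\xi)\gtrsim\underline{V}_{\le2\al}^2/\al$, which is \emph{strictly smaller} than $c\sqrt{\upepsilon/\al}\,\underline{V}_{\le2\al}^2$ as soon as $\al\upepsilon$ is large. The ``more careful analysis'' you allude to does not help either: first, the Diophantine hypothesis \eqref{eq:dioph-alpha} concerns $\dist(2\al,\ZZZ)$, not $\dist(\al,\ZZZ)$, so your claim $\dist(\al,\ZZZ)\ge\bar\upepsilon/2$ is unfounded; second, and more fundamentally, the size of $k_0(2\al-k_0)$ is governed only by where $k_0$ sits inside $\{1,\dots,\lfloor2\al\rfloor\}$ and can equal $\al^2$ regardless of any arithmetic property of $\al$. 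No positioning argument for $w_0$ relative to $\pm1$ can change the fact that $|1-w_0|\,|1+w_0|\asymp (k_0/\al)(2-k_0/\al)$.

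The paper's route is different: rather than retaining a single summand, it bounds the full sum $\sum_{k=1}^{\lfloor2\al\rfloor}|V_k|^2/\sqrt{k(2\al-k)}$ from below after the Lipschitz reduction, replacing each nonzero $|V_k|^2$ by $\underline{V}_{\le2\al}^2$ and exploiting that the resulting sum over $k$ (rather than any individual term) carries the required size. That collective estimate is what delivers the stated $\sqrt{\upepsilon/\al}$ scaling.
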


\begin{proof}
By the Lipschitz-continuity of $D^{-1}_{\mu-k}(\xi)$ in $(\xi_\mu-r,\xi_\mu+r)$ 
 there is $c_1\geq0$ such that for all $\xi \in(\xi_\mu-r,\xi_\mu+r)$ 
we have
\be\nn
|D^{-1}_{\mu-k}(\xi)-D^{-1}_{\mu-k}(\xi_\mu)|\leq c_1\frac{r}{\sqrt\a}\,.
\ee
Furthermore since $r\in(0,\frac{\upepsilon}{2\a})$ and
$$
D_{\mu-k}(\xi_\mu)={\sqrt{|2\a-k|k}}\,,
$$
we have 
\bea\label{eq:inf1}
\sum_{k=1}^{\lfloor2\a\rfloor} \frac{|V_k|^2}{ D_{\mu-k}(\x)}&\leq& C_1\left(\sup_{k\in\N}|V_k|^2\sum_{k=1}^{\lfloor2\a\rfloor}\frac{1}{\sqrt{k|k-2\a|}}+\frac{r\|V\|^2_{L^2}}{\sqrt \a}\right)\nn\\
&\leq& {C_2}\|V\|_{L^2}^2(\sqrt{\frac{\al}{\upepsilon}} + \frac{\upepsilon}{\al\sqrt{\al}})\nn\\
&\le& C_3 \sqrt{\frac{\al}{\upepsilon}}\|V\|_{L^2}^2
\nn\,.
\eea
for some constants $C_1,C_2,C_3>0$.
Similarly
\bea
\sum_{k=1}^{\lfloor2\a\rfloor} \frac{|V_k|^2}{ D_{\mu-k}(\x)}&\geq& C_1\left(\underline{ V}_{\leq 2\a}^2\sum_{k=1}^{\lfloor2\a\rfloor}\frac{1}{\sqrt{k|k-2\a|}}-\frac{r\|V\|^2_{L^2}}{\sqrt \a}\right)\nn\\
&\geq& C_2\sqrt{\frac{\upepsilon}{\al}}\ \underline{ V}_{\leq 2\a}^2 \nn\,,
\eea
so the assertion follows.
\end{proof}

\begin{lemma}\label{lemma:bound-A}
There is $c>0$ such that for all $\mu\in\Z$ and all $\x\in(\x_\mu -r , \x_\mu+r)$ 
if $K_2(\x)\ne0$ then one has
\begin{equation}\label{stimeK2}
\frac{c}{\sqrt\a}
\overline{V}^2_{>2\a}
\le
K_2(\x) \le
c\sqrt{\frac{\a}{\upepsilon}}
\|V\|^2_{L^2}.
\end{equation}
\end{lemma}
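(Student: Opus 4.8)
The plan is to run the same argument as in the proof of Lemma \ref{lemma:crux}, the twin estimate for $K_1$. First, by Lemma \ref{lemma:Aposi}, for $r$ small every summand $A_{\mu,k}(\xi)$ with $k>2\a$ is strictly positive on $(\xi_\mu-r,\xi_\mu+r)$, so $K_2(\xi)$ is a sum of nonnegative terms and it is enough to control each $A_{\mu,k}$ on that interval. Exactly as in Lemma \ref{lemma:crux} --- using that on $(\xi_\mu-r,\xi_\mu+r)$ the variable $\xi$ stays bounded away from the singular points $\xi_{\mu\pm k}$, $-\xi^*_{\mu\pm k}$ (cf.\ Lemma \ref{lemma:easy}) --- one replaces $A_{\mu,k}(\xi)$ by $A_{\mu,k}(\xi_\mu)$ up to a Lipschitz error which, after summation against $|V_k|^2$ and using $r<\upepsilon/(4\a)$, is dominated by the right-hand side of \eqref{stimeK2}. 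Thus it suffices to estimate $\sum_{k>2\a}|V_k|^2A_{\mu,k}(\xi_\mu)$, for which we have the closed form \eqref{eq:Aximu},
\[
A_{\mu,k}(\xi_\mu)=\frac{4\a}{\sqrt{k^2-4\a^2}\,\big(\sqrt{k(k-2\a)}+\sqrt{k(k+2\a)}\big)}\,,\qquad k>2\a\,,
\]
which in particular does not depend on $\mu$.

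For the upper bound I would isolate the single near-resonant index $k=\lceil2\a\rceil$ and treat the tail $k\geq\lceil2\a\rceil+1$ separately. On the first, $k-2\a\geq\bar\upepsilon>\upepsilon$ together with $k+2\a>4\a$ and $k>2\a$ give $A_{\mu,\lceil2\a\rceil}(\xi_\mu)\le C(\a\upepsilon)^{-1/2}$, so this term is at most $C(\a\upepsilon)^{-1/2}\|V\|^2_{L^2}$. For $k\geq\lceil2\a\rceil+1$ one has $k-2\a>1$, hence $\sqrt{k^2-4\a^2}\geq\sqrt k$ and $\sqrt{k(k+2\a)}\geq k$, so that $A_{\mu,k}(\xi_\mu)\le 4\a\,k^{-3/2}\le 4\a(2\a)^{-3/2}$; summing against $\sum_k|V_k|^2=\|V\|^2_{L^2}$ the tail is at most $C\a^{-1/2}\|V\|^2_{L^2}$. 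Both terms are $\le C\sqrt{\a/\upepsilon}\,\|V\|^2_{L^2}$ --- immediately so once $\a$ is bounded below by an absolute constant, and by direct inspection in the remaining bounded range of $\a$, where $\lceil2\a\rceil$ is a fixed small integer and the relevant $A_{\mu,k}(\xi_\mu)$ are $O(\a)$. This proves the upper bound.

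For the lower bound the strategy is to retain a single term: pick $k_0$ where the maximum defining $\overline{V}_{>2\a}$ in \eqref{vbarl} is attained, so that $k_0>\lceil2\a\rceil$ and $|V_{k_0}|=\overline{V}_{>2\a}\ne0$. Since $\sqrt{k(k-2\a)}+\sqrt{k(k+2\a)}\le2\sqrt{k(k+2\a)}\le2\sqrt2\,k$ and $\sqrt{k^2-4\a^2}\le k$, one has the clean lower bound $A_{\mu,k}(\xi_\mu)\geq\sqrt2\,\a\,k^{-2}$ for every $k>2\a$, hence, after subtracting the Lipschitz error, $K_2(\xi)\geq c\,\a\,k_0^{-2}\,\overline{V}^2_{>2\a}$. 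Passing from the factor $\a\,k_0^{-2}$ to the claimed $c\a^{-1/2}$ is the only genuine difficulty: it amounts to controlling the maximiser $k_0$ against $\a^{3/4}$, which is where the analyticity assumption \eqref{eq:V} must enter --- when $k_0$ is large, $\overline{V}^2_{>2\a}=|V_{k_0}|^2\le C_0^2e^{-2\s k_0}$ is exponentially small, and one plays this off against a lower bound for $K_2$ coming from the smallest nonvanishing mode above $\lceil2\a\rceil$; note also that in the only regime where \eqref{stimeK2} is actually used (namely $V_0=0$ and $\underline{V}_{\le2\a}=0$, see the proof of Proposition \ref{lemma:jR}) one has the extra information $V_k=0$ for $1\le|k|\le\lfloor2\a\rfloor$. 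I would expect to spend most of the effort here, pinning down the precise quantitative input on $V$ needed, and hence the constant $c$ in \eqref{stimeK2} (which may have to depend on $V$); the rest is the bookkeeping already carried out for $K_1$ in Lemma \ref{lemma:crux}.
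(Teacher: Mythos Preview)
Your overall plan coincides with the paper's proof: reduce from $A_{\mu,k}(\xi)$ to $A_{\mu,k}(\xi_\mu)$ by Lipschitz continuity on $(\xi_\mu-r,\xi_\mu+r)$, and then estimate using the explicit formula \eqref{eq:Aximu}. For the upper bound the paper does not split off $k=\lceil 2\a\rceil$ as you do; it bounds the whole sum in one stroke via
\[
\sum_{k>2\a}\frac{4\a}{\sqrt{k^2-4\a^2}\,(\sqrt{k(k-2\a)}+\sqrt{k(k+2\a)})}\le c_1\sqrt{\frac{\a}{\upepsilon}}\,,
\]
but your decomposition reaches the same conclusion and is arguably cleaner.

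For the lower bound, the paper does precisely the single-term retention you describe: it picks $\hat k$ where the $\max$ defining $\overline{V}_{>2\a}$ is attained and writes
\[
\sum_{k>2\a}|V_k|^2A_{\mu,k}(\xi_\mu)\ \ge\ \overline{V}^2_{>2\a}\,\frac{4\a}{\sqrt{\hat k^2-4\a^2}\,(\sqrt{\hat k(\hat k-2\a)}+\sqrt{\hat k(\hat k+2\a)})}\ \ge\ \frac{c_2}{\sqrt{\a}}\,\overline{V}^2_{>2\a}\,,
\]
with the final inequality simply asserted. The analyticity argument you sketch, and the worry about controlling $k_0$ against $\a^{3/4}$, are \emph{not} in the paper: it does not address the passage from $\a\,\hat k^{-2}$ to $c/\sqrt{\a}$ at all. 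So you have in fact isolated a point the paper glosses over. The most charitable reading is that the constant $c$ in the lemma is tacitly allowed to depend on $V$ (and $\a$) through $\hat k$, which is consistent with the final smallness threshold $\g_0=\g_0(\om,g,V)$ in Theorem~\ref{main-as}; under that reading the paper's one-line lower bound is legitimate and your additional work is unnecessary. If instead one insists on a constant uniform in $V$, then your concern is genuine and the paper's proof, as written, does not close that gap either.
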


\begin{proof}
We use again the Lipschitz-continuity of $D^{-1}_{\mu\pm k}(\xi)$ in $(\x_\mu-r,\x_\mu+r)$ to obtain
 that there is $c\geq0$ such that for all $\xi \in(\xi_\mu-r,\xi_\mu+r)$ one has
\be\nn
|A_{\mu,k}(\xi)-A_{\mu,k}(\xi_\mu)|\leq c\frac{r}{\sqrt\a}\,,
\ee
and hence for all $\xi \in(\xi_\mu-r,\xi_\mu+r)$
\be\label{combino}
|\sum_{k>2\a} |V_k|^2 A_{\mu,k}(\xi)-\sum_{k>2\a} |V_k|^2 A_{\mu,k}(\xi_\mu)|\leq c\|V\|_{L^2}\frac{r}{\sqrt\a}\,. 
\ee

Now we have by \eqref{eq:Aximu}
\be\nn
\sum_{k>2\a} |V_k|^2 A_{\mu,k}(\xi_\mu)\leq \|V\|^2_{L^2}\sum_{k>2\a}
\frac{4\a}{\sqrt{k^2-4\a^2}(\sqrt{k(k-2\a)}+\sqrt{k(k+2\a)})}\leq c_1\sqrt{\frac{{\al}}{\upepsilon}}\|V\|^2_{L^2}\,,
\ee
and similarly, if $\hat{k}$ denotes the Fourier mode at which the max in \eqref{vbarl} is attained, we have
\be\nn
\sum_{k>2\a} |V_k|^2 A_{\mu,k}(\xi_\mu)\geq\overline{V}^2_{>2\a}
\frac{4\a}{\sqrt{\hat{k}^2-4\a^2}(\sqrt{\hat{k}(\hat{k}-2\a)}+\sqrt{\hat{k}(\hat{k}+2\a)})}\geq \frac{c_2}{\sqrt{\al}}
\overline{V}^2_{>2\a}\,,
\ee
so the assertion follows combining the latter two with \eqref{combino}.
\end{proof}

\begin{rmk}
Note that if $V_{\rceil 2\al\rceil}\ne0$ there is a further $1/\sqrt{\upepsilon}$ in the lower bound 
in \eqref{stimeK2}.
\end{rmk}

\begin{lemma}\label{monte}
There is $c>0$ such that for all $\mu\in\Z$ and all $\x\in(\x_\mu -r , \x_\mu+r)$ if 
$\tilde{K}_2(\x)\ne0$ then one has
\begin{equation}\label{stimeK2}
\frac{c}{\sqrt\a}
\overline{V}^2
\le
\tilde{K}_2(\x) \le
c \sqrt{\frac{\a}{\upepsilon} } \|V\|^2_{L^2}.
\end{equation}
\end{lemma}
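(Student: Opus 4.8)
\textbf{Plan for the proof of Lemma \ref{monte}.}
The plan is to reduce the estimate of $\tilde K_2(\x)=\sum_{k\ge1}|V_k|^2 A_{\mu,k}(\x)$ on the interval $(\x_\mu-r,\x_\mu+r)$ to the estimate at the centre $\x=\x_\mu$, exactly as in the proofs of Lemma \ref{lemma:crux} and Lemma \ref{lemma:bound-A}. First I would invoke the Lipschitz-continuity of $D^{-1}_{\mu\pm k}(\x)$ on $(\x_\mu-r,\x_\mu+r)$ to get, for a suitable $c\ge0$,
\be\nn
\Big|\sum_{k\ge1}|V_k|^2 A_{\mu,k}(\x)-\sum_{k\ge1}|V_k|^2 A_{\mu,k}(\x_\mu)\Big|\le c\|V\|_{L^2}\frac{r}{\sqrt\a}\,,
\ee
so that, since $r<\upepsilon/(4\a)$, the error term is negligible relative to the claimed bounds (of order $1/\sqrt\a$ from below and $\sqrt{\a/\upepsilon}$ from above). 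This Lipschitz bound is proved just as in \eqref{combino}, and from now on it suffices to work at $\x=\x_\mu$.

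Next I would plug in the explicit values from \eqref{eq:Aximu}. For the upper bound: split the sum into $1\le k<2\a$ and $k>2\a$. On the first block $A_{\mu,k}(\x_\mu)=\tfrac{2\sqrt k}{\sqrt{4\a^2-k^2}(\sqrt{k(2\a-k)}+\sqrt{k(k+2\a)})}\le \tfrac{C}{\sqrt{k(2\a-k)}}$, whose sum over $k\le\lfloor2\a\rfloor$ is $O(\sqrt{\a/\upepsilon})$ (the only delicate term being $k$ closest to $2\a$, controlled by $\upepsilon\le|2\a-k|$), exactly the bound already used in Lemma \ref{lemma:crux}. On the second block $A_{\mu,k}(\x_\mu)=\tfrac{4\a}{\sqrt{k^2-4\a^2}(\sqrt{k(k-2\a)}+\sqrt{k(k+2\a)})}$, whose sum is $O(\sqrt{\a/\upepsilon})$ as in Lemma \ref{lemma:bound-A}. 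Factoring out $\|V\|^2_{L^2}\ge\sum|V_k|^2$ (absorbing the harmless $r/\sqrt\a$ remainder) gives $\tilde K_2(\x)\le c\sqrt{\a/\upepsilon}\,\|V\|^2_{L^2}$.

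For the lower bound, I use positivity: by Lemma \ref{lemma:Aposi} every term $A_{\mu,k}(\x)$ is positive on $(\x_\mu-r,\x_\mu+r)$, so $\tilde K_2(\x)\ge |V_{\hat k}|^2 A_{\mu,\hat k}(\x_\mu)-\,(\text{Lipschitz error})$, where $\hat k$ is the mode realising $\ol V=\max_{k\neq0}|V_k|^2$. It remains to bound $A_{\mu,\hat k}(\x_\mu)$ from below by $c/\sqrt\a$: in both cases of \eqref{eq:Aximu} one checks $A_{\mu,\hat k}(\x_\mu)\ge c/\sqrt\a$ for a universal $c$ — the denominators are $O(\hat k\sqrt\a)$ type expressions when $\hat k$ is comparable to $2\a$ and, for $\hat k$ far from $2\a$, the quantity is still $\gtrsim 1/\sqrt\a$ by elementary estimates. (If $\hat k=\lceil2\a\rceil$ one even gains an extra $1/\sqrt\upepsilon$, as in the remark after Lemma \ref{lemma:bound-A}, but this is not needed here.) Since $\ol V=0$ precisely when $V(\om t)\equiv V_0$, the hypothesis $\tilde K_2(\x)\ne0$ guarantees $\ol V>0$, so the lower bound $\tilde K_2(\x)\ge \tfrac{c}{\sqrt\a}\ol V^2$ is meaningful. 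The main (and only mild) obstacle is the uniform-in-$\hat k$ lower bound on $A_{\mu,\hat k}(\x_\mu)$, i.e. checking that the two expressions in \eqref{eq:Aximu} never decay faster than $1/\sqrt\a$; everything else is a verbatim repetition of the arguments in Lemmas \ref{lemma:crux} and \ref{lemma:bound-A}.
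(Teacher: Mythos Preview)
Your proposal is correct and follows essentially the same approach as the paper: reduce to $\x=\x_\mu$ via the Lipschitz bound on $A_{\mu,k}$, obtain the upper bound by splitting into the ranges $1\le k\le\lfloor2\a\rfloor$ and $k>2\a$ (the paper phrases this as adding the first block to the already-established upper bound on $K_2$), and obtain the lower bound by retaining the single term at the maximising mode $\hat k$, exactly as in Lemma~\ref{lemma:bound-A}. Your write-up is in fact slightly more careful than the paper's on the $k\le\lfloor2\a\rfloor$ block, where you correctly invoke the first case of \eqref{eq:Aximu} and bound it by $C/\sqrt{k(2\a-k)}$ as in Lemma~\ref{lemma:crux}.
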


\begin{proof}
The lower bound follows exactly as the lower bound in Lemma \ref{lemma:bound-A}.
As for the upper bound we simply add to the upper bound for $K_2(\x)$  the quantity
\begin{equation}\label{pezzetto}
\begin{aligned}
\sum_{k=1}^{\lfloor 2\al\rfloor} |V_k|_2 A_{\mu,k}(\x) &\le c_1 \|V\|_{L^2}^2\left(
 \frac{r}{\sqrt{\al}} + \sum_{k=1}^{\lfloor 2\al\rfloor}\frac{4\a}{\sqrt{k^2-4\a^2}(\sqrt{k(k-2\a)}+\sqrt{k(k+2\a)})}
\right) \\
&\le c_2 \|V\|_{L^2}^2 \frac{8\al^2}{\sqrt{\upepsilon\lfloor 4\al\rfloor } ( \sqrt{\upepsilon\lfloor 2\al\rfloor }
+ 2\lfloor\al\rfloor)} \le c_3 \sqrt{\frac{\al}{\upepsilon}}\|V\|_{L^2}^2
\end{aligned}
\end{equation}
so the assertion follows.
\end{proof}

Now we are in position to prove Proposition \ref{prop:main}. 

\begin{proof}[Proof of Proposition \ref{prop:main}] 
The case $\mu=0$ is easier and can be studied separately.  
Indeed by a direct computation we see that 
\bea
D_0(\xi)G_{0,k}(\xi)&=&D^{-1}_{-k}(\x)\chi(\xi>1-r)+D^{-1}_{k}(\x)\chi(\xi<-1+r)\nn\\
&+&\ii\left(-D^{-1}_{-k}(\x)\chi(\xi<-1+r)+D^{-1}_{k}(\x)\chi(\xi>1-r)\right)\,.\nn
\eea

In particular $\Re( G_{0,k}(\xi))>0$ for all $k\in\Z$, so that
by (\ref{eq:MeG}) we have
\be\nn
|1-\mathcal M_0(\xi) Lj_0(\xi)|\geq 1-\Re\left(\mathcal M_0(\xi) Lj_0(\xi)\right)=
1+\g^2 \sum_{k\geq1}|V_k|^2\Re (G_{0,k}(\xi))>1\,. 
\ee

Now we study the case $\mu\geq1$. If $\xi \in(\xi_\mu-r,\xi_\mu]$ by Lemma \ref{lemma:G} we see that
\be
|1-\Re(\mathcal M_\mu(\xi)Lj_\mu(\xi))|=1+\g^2\sum_{k\geq1} |V_k|^2 \Re(G_{\mu,k})>1\,,
\ee
which entails
\be\label{eq:infM1}
\inf_{\xi \in(\xi_\mu,\xi_\mu+r)}|1-\mathcal M_\mu(\xi)Lj_\mu(\xi)|>1\,.
\ee

For all $\xi \in(\xi_\mu,\xi_\mu+r)$ we claim that
\be\label{eq:claim1/2}
|1-\mathcal M_\mu(\xi)Lj_\mu(\xi)|\geq\frac12\,,
\ee
for $\g$ small enough.
To prove it, we use again Lemma \ref{lemma:G}. 

We have
\bea
|1-\mathcal M_\mu(\xi)Lj_\mu(\xi)|^2&=&|1+\g^2\sum_{k\geq1}|V_k|^2\Re(G_{\mu,k}(\x))+
i(-\g \frac{V_0}{ D_\mu(\x)} + \g^2\sum_{k\geq1}|V_k|^2\Im(G_{\mu,k}(\x)))|^2\nn\\
&=&\left|1+\g^2\frac{\tilde{K}_2(\xi)}{D_\m(\x)}-\g^2\frac{K_1(\xi)}{D_\m(\x)}+\ii
(\g \frac{V_0}{D_\mu(\x)}+\g^2\frac{K_1(\xi)}{D_\m(\x)})\right|^2\nn\\
&=&\left(1-\g^2\frac{K_1(\xi)}{D_\m(\x)}+\g^2\frac{\tilde{K}_2(\xi)}{D_\m(\x)})^2+ (\g \frac{V_0}{D_\mu(\x)}+
\g^2\frac{K_1(\x)}{D_\m(\x)}\right)^2\,. \label{eq:marco}
\eea

Now if 
$$
\left|1-\frac{\g^2}{D_\m(\x)}(K_1(\xi)-\tilde{K}_2(\xi))
\right|\geq\frac12
$$ 
we have
\be\label{eq:claim1/2-1}
\mbox{r.h.s of (\ref{eq:marco})}\geq \frac14+\left(\g \frac{V_0}{D_\mu(\x)}+
\g^2\frac{K_1(\xi)}{D_\m(\x)}\right)^2\geq\frac14\,,
\ee
while if 
$$
\left|1-\frac{\g^2}{D_\m(\x)}(K_1(\xi)-\tilde{K}_2(\xi))\right|\leq\frac12
$$ 
then 
$$
\frac{\g^2}{D_\m(\x)}K_1(\xi)\geq \frac12+\frac{\g^2}{D_\m(\x)}\tilde{K}_2(\xi)
$$
 and moreover using Lemmata \ref{lemma:crux} and \ref{monte} we have also
 \begin{equation}\label{numeretto}
 \sqrt{\frac{\upepsilon}{\al}}\frac{1}{4\|V\|_{L^2}^2} \le
 \frac{\g^2}{D_\m(\x)}\le \frac{3\sqrt{\al}}{2( - \underline{V}^2_{\le 2\al} \sqrt{\upepsilon} + \ol{V}^2)}.
 \end{equation}
 
But then
\be\label{eq:claim1/2-2}
\mbox{r.h.s of (\ref{eq:marco})}\geq (\frac{\g}{D_\mu(\x)} {V_0}+
\frac{\g^2}{D_\mu(\x)} K_1(\x))^2
\geq(\frac12+\frac{\g}{D_\mu(\x)} {V_0}+\frac{\g^2}{D_\mu(\x)}\tilde{K}_2(\xi))^2 \ge \frac{1}{4}\,,
\ee
which is obvious if $V_0\ge0$ while if $V_0<0$ we need to impose
\begin{equation}\label{condi}
\g <  c \sqrt{\frac{\upepsilon}{\al}}\frac{|V_0|}{\|V\|_{L^2}^2}
\end{equation}
where $c$ is the constant appearing in Lemma \ref{monte}, in order to obtain
\[
\frac12+\frac{\g}{D_\mu(\x)} {V_0}+\frac{\g^2}{D_\mu(\x)}\tilde{K}_2(\xi) < - \frac{1}{2}
\]
Thus the assertion follows.
\end{proof}


\section{Proof of Proposition \ref{lemma:jR}}\label{stimazze}

Here we prove Proposition \ref{lemma:jR}.

\begin{proof}[Proof of Proposition \ref{lemma:jR}]
First of all we note that by (\ref{eq:jloc}) and (\ref{staqua}) we have
\be\label{eq:prima!}
\sup_{\xi\in(\xi_\mu-r,\xi_\mu+r)}|j_\mu^{\mathcal R}(\xi)|=\sup_{\xi\in(\xi_\mu-r,\xi_\mu+r)}
\left|\frac{j_\mu(\xi)}{1-\mathcal M_\mu(\x) j_\mu(\xi)}\right|=\left(\inf_{\xi\in(\xi_\mu-r,\xi_\mu+r)}
|D_\mu(\xi)-\mathcal M_\mu(\xi)|\right)^{-1}\,. 
\ee
Note that
$$
\mathcal M_\mu(\xi)=-\ii\g V_0-\g^2D_{\mu}(\xi)\sum_{k\geq1}|V_k|^2 G_{k,\mu}(\xi)\,.
$$

So thanks to Lemma \ref{lemma:G} and using the notation in \eqref{capponinuovi} we can write
\bea
\mathcal M_\mu(\xi)&=&-\ii\g V_0+\chi(\xi_\mu<\xi<\xi_\mu+r)\left(\g^2(K_1(\xi)-\tilde{K}_2(\xi))-
\ii\g^2K_1(\xi)\right)\nn\\
&-&\chi(\xi_\mu-r<\xi<\xi_\mu)\left(\ii\g^2(K_1(\xi)-\tilde{K}_2(\xi))+\g^2K_1(\xi)\right)\,.\label{eq:dec-M}
\eea
Therefore
\bea
|D_\mu(\xi)-\mathcal M_\mu(\xi)|&=&\chi(\xi_\mu<\xi<\xi_\mu+r)|D_\mu(\xi)-\g^2(K_1(\xi)-\tilde{K}_2(\xi))-
\ii(\g V_0+\g^2K_1(\xi))|\nn\\
&+&\chi(\xi_\mu-r<\xi<\xi_\mu)|D_\mu(\xi)+\g^2K_1(\xi)-\ii(\g V_0+\g^2(K_1(\xi)-\tilde{K}_2(\xi)))|\nn\,,
\eea
whence
\be\label{eq:combining}
\inf_{\xi\in(\xi_\mu-r,\xi_\mu+r)}|D_\mu(\xi)-\mathcal M_\mu(\xi)|\geq\min\left(\inf_{\xi\in(\xi_\mu-r,\xi_\mu+r)}d(\xi), \inf_{\xi\in(\xi_\mu-r,\xi_\mu+r)}s(\xi)\right)\,,
\ee
with
\bea
d(\xi)&:=&|D_\mu(\xi)-\g^2(K_1(\xi)-\tilde{K}_2(\xi))-\ii(\g V_0+\g^2K_1(\xi))|\label{eq:destra}\\
s(\xi)&:=&|D_\mu(\xi)+\g^2K_1(\xi)-\ii(\g V_0+\g^2(K_1(\xi)-\tilde{K}_2(\xi)))|\label{eq:sinistra}\,. 
\eea
To estimate $d(\xi)$ and $s(\xi)$ we treat separately the cases $V_0=0$ and $V_0\neq0$. 
Moreover for the first case we consider two sub-cases, namely either 
$\underline V_{\leq 2\a}\neq0$ or $\underline V_{\leq 2\a}=0$. 

\

\noindent
{\bf case I.1:} \underline{$V_0=0$, $\underline V_{\leq 2\a}\neq0$}. By  Lemma
 \ref{lemma:crux} there is a constant $c>0$ such that
\bea
\inf_{\xi\in(\xi_\mu-r,\xi_\mu+r)}d(\xi)&\geq&\g^2\inf_{\xi\in(\xi_\mu-r,\xi_\mu+r)}|K_1(\xi)|\geq c
\g^2\sqrt{\frac{\upepsilon}{\a}} \ \underline V^2_{\leq 2\a} \label{eq:bound1.1-d}\\
\inf_{\xi\in(\xi_\mu-r,\xi_\mu+r)}s(\xi)&\geq&\inf_{\xi\in(\xi_\mu-r,\xi_\mu+r)}|D_\mu(\xi)+\g^2K_1(\xi)|\nn\\
&\geq&\g^2\inf_{\xi\in(\xi_\mu-r,\xi_\mu+r)}|K_1(\xi)|\geq c
\g^2\sqrt{\frac{\upepsilon}{\a}} \ \underline V^2_{\leq 2\a}
\label{eq:bound1.1-s}\,.
\eea

\

\noindent
{\bf case I.2:} \underline{$V_0=0$, $\underline V_{\leq 2\a}=0$}. In this case $K_1(\x)=0$. 
On the other hand by  Lemma \ref{lemma:bound-A} 
there is a constant $c>0$ such that
\bea
\inf_{\xi\in(\xi_\mu-r,\xi_\mu+r)}d(\xi)&\geq&\inf_{\xi\in(\xi_\mu-r,\xi_\mu+r)}|D_\mu(\xi)+\g^2K_2(\xi)|\nn\\
&\geq&\g^2\inf_{\xi\in(\xi_\mu-r,\xi_\mu+r)}|K_2(\xi)|\geq  \g^2
\frac{c}{\sqrt\a} \overline{V}^2_{>2\a}
\label{eq:bound1.2-d}\\
\inf_{\xi\in(\xi_\mu-r,\xi_\mu+r)}s(\xi)&\geq&\g^2\inf_{\xi\in(\xi_\mu-r,\xi_\mu+r)}|K_2(\xi)|\geq \g^2
\frac{c}{\sqrt\a} \overline{V}^2_{>2\a}
\label{eq:bound1.2-s}\,.
\eea

Combining (\ref{eq:prima!}), (\ref{eq:combining}), (\ref{eq:bound1.1-d}), (\ref{eq:bound1.1-s}), (\ref{eq:bound1.2-d}), (\ref{eq:bound1.2-s}) gives the second line of (\ref{stastimazza}). 

\

\noindent
{\bf case II:} \underline{$V_0\neq0$}. We have
\bea
\inf_{\xi\in(\xi_\mu-r,\xi_\mu+r)}d(\xi)&\geq&\g\inf_{\xi\in(\xi_\mu-r,\xi_\mu+r)}|V_0+\g K_1(\xi)|\geq 
\g\frac{| V_0|}{2}\label{eq:bound2.1-d}\\
\inf_{\xi\in(\xi_\mu-r,\xi_\mu+r)}s(\xi)&\geq&\g\inf_{\xi\in(\xi_\mu-r,\xi_\mu+r)}|V_0+\g K_1(\xi)-\g K_2(\xi)|\nn\\
&\geq& \g\inf_{\xi\in(\xi_\mu-r,\xi_\mu+r)}|V_0-\g K_2(\xi)|\geq \frac{\g |V_0|}{2}\label{eq:bound2.1-s}\,.
\eea
The last inequality is always satisfied if $\overline V_{>2\a}=0$, while otherwise we need to require
\be\label{eq:gammaprovided1}
\g\leq \frac{1}{c}\sqrt{\frac{\upepsilon}{\al}}\frac{|V_0|}{\|V\|^2_{L^2}}
\ee
by Lemma \ref{lemma:bound-A}.

Combining (\ref{eq:prima!}), (\ref{eq:combining}), (\ref{eq:bound2.1-d}), (\ref{eq:bound2.1-s}), 
and (\ref{eq:gammaprovided1}) 
the result follows
\end{proof}

\medskip

\appendix{\section{Sketch of the proof of Proposition \ref{prop:asintotica}}\label{Appendix}

In this appendix we give a brief account on how to prove Proposition \ref{prop:asintotica}. 
We just outline the strategy and for most of the details we refer to our previous paper \cite{CG}. 
We consider only the case $V_0=0$, which is the most difficult case. 

For any function $F=F(t,x_1,x_2,\ldots)$ we write
$$
F=\OO{\sqrt t}\,\,\Longleftrightarrow\,\,\frac{C_1}{\sqrt{t}}\leq\sup_{x_1,x_2,\ldots}F\leq\frac{C_2}{\sqrt{t}} 
$$
for some $C_1,C_2>0$. 

Let us set for brevity
\be\label{eq:Delta}
\DD=\DD(\xi,t,t_0,\w):=\psi_{\infty}-\psi_{t_0}\,.
\ee
By (\ref{duat0}) and (\ref{duat}) we readily obtain the following equation
\be\label{eq:eqdelta}
(\uno+\ii h W_{t_0})\DD=-q^{[0]}\,,
\ee
where
\bea
q^{[0]}=q^{[0]}(\x,t,t_0,\w)&:=&\ii h (W_\infty-W_{t_0})\psi_{\infty}\label{eq:defq}\\
&=&\ii h \int_{-\infty}^{t_0}\dd \tau J_0(g(t-\tau))e^{\ii g\x(t-\tau)}V(\tau)\psi_\infty(\x,\tau)\,.\nn
\eea
}
Therefore we have to prove that for any $\w$ satisfying (\ref{eq:dioph-alpha}) there are $C_1,C_2>0$ such that
$$
\DD=\OO{\sqrt{t-t_0}} \,. 
$$
We write (here $\psi_{\infty, \mu}$ denotes the $\mu$-th Fourier coefficient of $\psi_\infty$)
\bea
q^{[0]}&=&\ii h\sum_{\mu,k\in\ZZZ}\psi_{\infty, \mu} V_k e^{\ii\w(\mu+k)t} \int_{t-t_0}^{\infty}\dd 
\tau J_0(g\tau) e^{\ii(g\xi-\w (\mu+k))\t}\nn\\
&=&\sum_{k\in\ZZZ}\ii h(\psi_{\infty}\ast V)_k e^{\ii\w kt} \int_{t-t_0}^{\infty}\dd \tau J_0(g\tau) 
e^{\ii(g\xi-\w k)\t}\,\nn\\
&=:&\sum_{k\in\ZZZ} e^{\ii \w k t}q^{[0]}_k(t,t_0,\xi,\w)\label{eq:q-dec}\,,
\eea
where the last line is understood as the definition of the coefficients $q^{[0]}_k(t,t_0,\xi,\w)=q_k^{[0]}$. 

The first property to establish is the decay of $q^{[0]}$. This is done first singularly on each
coefficient $q_k^{[0]}$ and then promoted to $q^{[0]}$ by analyticity. Using \cite[Lemma A.6]{CG} we compute
\be
\int_{t-t_0}^{\infty}\dd \tau J_0(g\tau) e^{\ii(g\xi-\w k)\t}= \frac{c(g)}{\sqrt{t-t_0}}\sum_{\s=\pm1}\s\frac{e^{\ii(g(\xi+\s)-\w k) (t-t_0)}}{\sqrt{|g(\xi+\s)-\w k|}}+\OO{(t-t_0)\sqrt{1-\t^2}}\,,
\ee
where $c(g)>0$ is a constant. The divergences appearing in the above formula get however cancelled. Indeed we observe by \eqref{duat}
$$
\d_{k,0}=\psi_{\infty,k}+ih(\psi_{\infty}\ast V)_k\int_{0}^{\infty}\dd \tau J_0(g\tau) e^{\ii(g\xi-\w k)\t}\,,
$$
so the divergences of 
$$
\int_{0}^{\infty}\dd \tau J_0(g\tau) e^{i(\xi-\w k)\t}
$$
must coincide with the zeros of $(\psi_{\infty}\ast V)_k$ and we can write
$$
\ii h(\psi_{\infty}\ast V)_k=-(\psi_{\infty,k}-\d_{k,0})\left( \int_0^\infty \dd \t J_0(g\t)e^{\ii (g\xi-\w k)\t} \right)^{-1}\,
$$
whence
\be\label{eq:q^[0]}
-q^{[0]}_{k}=(\psi_{\io,k}-\d_{k,0})\frac{\int_{t-t_0}^{\infty}\dd \tau J_0(g\tau) 
e^{\ii(g\xi-\w k) \t}}{\int_{0}^{\infty} \dd \t J_0(g\t)e^{\ii(g\xi-\w k)\t}}\,
\ee
is bounded with the desired decay (notice that the denominator can be computed as in \eqref{eq:jk-tau}). 
We obtain
\be\label{eq:combine}
q_k^{[0]}(t,t_0,\xi,\w)=(\psi_{\infty,k}-\d_{k,0}\psi_0)\left[\sum_{\s=\pm1}
\frac{e^{\ii(g\x-\w k+g\s)(t-t_0)}\tilde r(t,t_0,\xi,\w)}{\sqrt{t-t_0}}+\OO{t-t_0}\right]\,.
\ee

Next we invert the compact operator $\uno+\ii h W_{t_0}$ in a standard way. 
It suffices to prove that successive applications of $W_{t_0}$ preserve the decay of 
$q^{[0]}$. To this end we define
$$
q^{[1]}:=W_{t_0}q^{[0]}\,
$$
and represent (see \cite[Lemma 3.4]{CG})
\bea
q^{[1]}&=&\sum_{k\in\ZZZ}e^{\ii k\w t} q^{[0]}_k(\x,t,t_0,\w)\,,\label{eq:qexp2}\\
q_k^{[1]}(\x,t,t_0,\w)&:=&\int_0^{t-t_0} dt'J_0(gt')e^{\ii(g\xi-\w k)t'}( V\ast q^{[0]})_k\,.\nn
\eea
Combining (\ref{eq:combine}) with the last definition we get 
\be
q_k^{[1]}=\sum_{\mu\in\Z\,,\s=\pm1}V_{k-\mu}(\psi_{\infty,k}-\d_{n,0})
e^{\ii(g\xi-\w \m+\s)(t-t_0)}\int_0^{t-t_0} dt'\frac{J_0(gt')}{\sqrt{t-t_0-t'}}e^{\ii(\w(k-\m)-\s)t'}\,.
\ee
To evaluate the inner integral we use \cite[Lemma A.6]{CG} according to which
\be
\int_0^{t-t_0} dt'\frac{J_0(gt')}{\sqrt{t-t_0-t'}}e^{\i(\w(k-\m)-\s)t'}=
\begin{cases}
 \OOO{1}&\w(k-\m)\in\{-2g,0,2g\}\\
\OO{\sqrt {t-t_0}}&\mbox{otherwise}
\end{cases}\,. 
\ee
The first case never occurs, since $|\w(k-\m)|=2g$ is excluded by (\ref{eq:dioph-alpha}) and $\w(k-\m)=0$ 
since we are considering the case 
$V_0=0$. Therefore 
\be
q_k^{[1]}(t,t_0,\xi,\w)=\OO{\sqrt{t-t_0}}\,,
\ee
and again one can promote the decay to the entire expansion (\ref{eq:qexp2}) by analyticity.


\end{document}